\documentclass{icm2010}

\usepackage{amsmath,amssymb}
\usepackage{bbm}
\usepackage{graphics,graphicx,epsfig,psfrag}  % added psfrag

%%%% fonts

%%%%% Doppelstrich-Buchstaben

\def\EE{\mathbb{E}}

\def\RR{\mathbb{R}}

%%%%%%%% hat symbols
\def\hZ{\widehat{Z}}

%%%%%%%% bigtimes, one
\newcommand{\btimes}{\mbox{\huge \raisebox{-0.2ex}{$\times$}}} 
\newcommand{\one}{\mathbbm 1}

%%%%%%%%% other abbreviations

\newcommand{\ts}{\hspace{0.5pt}}

%Other(rm, cal, frak)

\def\cM{\mathcal{M}}

\title[Deterministic and stochastic aspects of single-crossover
recombination]{Deterministic
  and stochastic aspects \\ of  single-crossover recombination}
\author[Baake]{Ellen Baake}
%\thanks{Remove percentage sign to put a footnote with support, thanks, etc}%
\contact[ebaake@techfak.uni-bielefeld.de]{Faculty of Technology, 
Bielefeld University, 33594 Bielefeld, Germany}

% You can remove the percentage signs to use the theorem style %
% with numbering within the section %
%\newtheorem{theorem}{Theorem}[section]
%\newtheorem{prop}[theorem]{Proposition}
%\newtheorem{lemma}[theorem]{Lemma}
%\newtheorem*{cor}{Corollary}
%\theoremstyle{definition}
%\newtheorem{definition}{Definition}

\newtheorem{theorem}{Theorem}
\newtheorem{prop}[theorem]{Proposition}
\newtheorem{lemma}[theorem]{Lemma}

\theoremstyle{definition}

\begin{document}

\begin{abstract}

This contribution is concerned with mathematical 
models for the dynamics of the genetic 
composition of populations evolving
under recombination. Recombination is the genetic
mechanism by  which two parent individuals create the
mixed type of their offspring during sexual reproduction.
The corresponding models are large, nonlinear
dynamical systems (for the deterministic treatment that applies
in the infinite-population limit), or
interacting particle systems (for the stochastic treatment
required for finite populations). We review recent progress on
these  difficult problems. In particular, we
present a closed solution of the deterministic continuous-time
system, for the important special case of single crossovers;
we extract an underlying linearity; we analyse how this carries
over to the corresponding stochastic setting; and we provide a 
solution of the analogous deterministic discrete-time dynamics, in terms of 
its generalised eigenvalues and a simple recursion for the corresponding
coefficients. 
 \end{abstract}

% Your AMS 200 Classification should come here %
\begin{classification}
Primary 92D10, %Genetics
        34L30;  %Nonlinear ordinary differential operators
Secondary 37N25, %    	Dynamical systems in biology
          06A07,  % 	Combinatorics of partially ordered sets
          60J25.  % 	Continuous-time Markov processes on general state spaces
\end{classification}

% Any keywords %
\begin{keywords}
Population genetics, recombination dynamics, M\"obius linearisation and
diagonalisation, correlation functions, Moran model.
\end{keywords}

% Do not remove next line %
\maketitle

% Now the text of your article starts

\section{Introduction}
Biological evolution is a complex phenomenon driven by various processes,
such as muation and recombination of genetic material, reproduction of 
individuals, and selection of favourable types. The area of
\emph{population genetics} is concerned with how these processes
shape and change the genetic structure of populations. 
\emph{Mathematical population genetics} was founded in the 1920's by
Ronald Fisher, Sewall Wright, and John Haldane, and thus is among the
oldest areas of mathematical biology. The reason for its continuing (and 
actually 
increasing) attractiveness for both mathematicians and biologists
is at least twofold: Firstly, there is a true need for
mathematical models and methods, since the outcome of evolution
is impossible to predict (and, thus, today's genetic data are
impossible to analyse) without their help. Second, the processes of
genetics lend themselves most naturally to a mathematical formulation
and give rise to a wealth of fascinating new problems, concepts, and
methods.

This contribution will focus on the phenomenon of \emph{recombination},
in which two parent individuals are involved in creating the
mixed type of their offspring during sexual reproduction. 
The essence of this process is illustrated in Fig.~\ref{fig:lifecycle}
and may be
idealised and summarised as follows. 

\begin{figure}[h]
  \begin{center}
  \includegraphics[scale=.34]{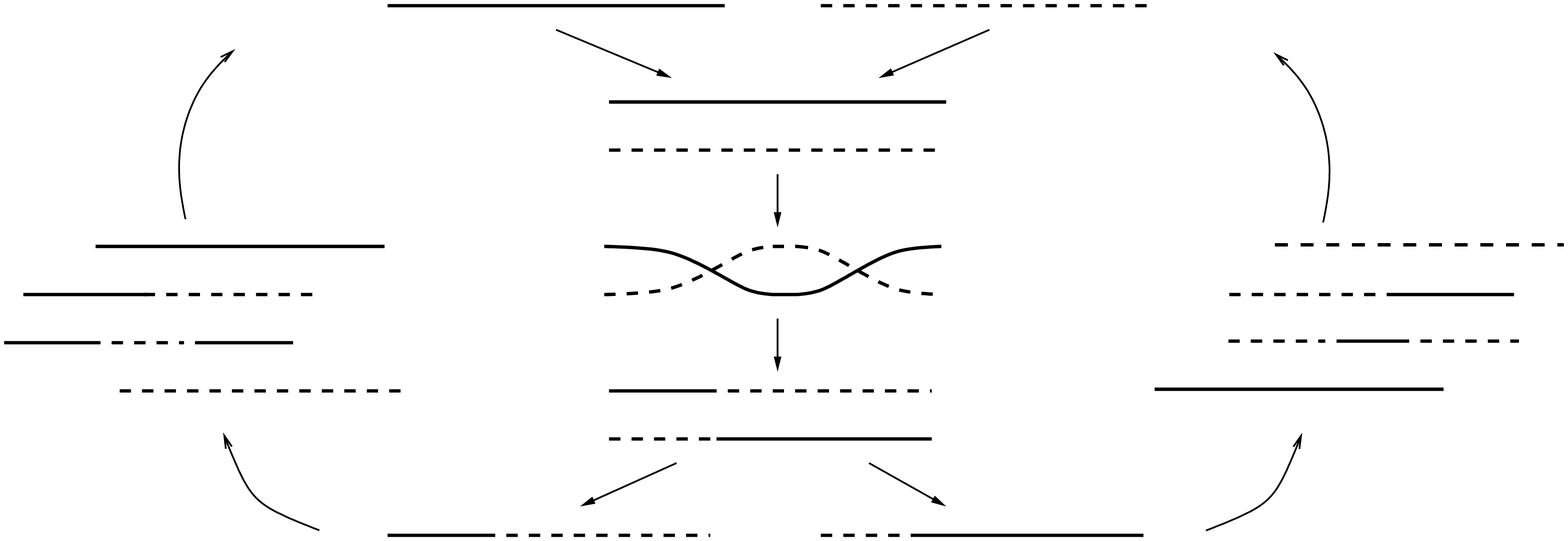}
  \end{center}  \caption{\label{fig:lifecycle} {\small Life cycle of
   a population under sexual reproduction and recombination. Each line
symbolises a sequence of sites that defines a gamete (like the two
at the top that start the cycle as `egg' and `sperm'). The pool of gametes
at the left and the right comes from a large population of recombining
individuals. These sequences meet randomly to start the next round of the
cycle.}}
\end{figure}

Genetic information is encoded
in terms of sequences of finite length. Eggs and  sperm 
(i.e.,  female and  male germ cells or \emph{gametes}) 
each carry a single copy of such a sequence. They go through the
following life cycle: At fertilisation, 
two gametes meet randomly and unite,
thus starting the life of a new individual, which is equipped with both the
maternal and the paternal sequence. At maturity, this individual
will generate its own germ cells. This process includes recombination,
that is, the maternal and paternal sequences perform one or more
\emph{crossovers} and are cut and relinked accordingly, so that
two `mixed' sequences emerge. These are the new gametes and start
the next round of  fertilisation 
(by random mating within a large population).

Models of this process aim at describing the dynamics of the
genetic composition of
a population that goes through  this life cycle repeatedly.
These models  come in various flavours: in discrete or
continous time; with various assumptions about the crossover pattern;
and, most importantly, in a deterministic or a stochastic
formulation, depending on whether the population is assumed to be
so large that stochastic fluctuations may be neglected. 
In any case, however, the resulting process appears difficult
to treat, due to the large number of possible states and the
nonlinearity generated by the random mixture of gametes.  
Nevertheless, a number of solution procedures have been 
discovered for the deterministic discrete-time setting 
\cite{Benn54,Daws02,Geir44}, and 
the underlying mathematical structures were investigated
within the framework of genetic algebras, see \cite{Lyub92,McHR83,Ring85}.
Quite generally, the 
solution relies on a certain nonlinear transformation (known as Haldane
linearisation) from (gamete or type)  frequencies 
to suitable correlation functions, which 
decouple from each other and decay geometrically. But if sequences of
more than three sites 
are involved, this transformation must be constructed via recursions that 
involve the parameters of the recombination process, and is  not 
available explicitly except in the trivial case of independent sites. For a 
review of the area, see \cite[Ch.~V.4]{Buer00}.

In this contribution, we  concentrate on  a  
special case that is both biologically and mathematically relevant,
namely, the situation in which at most one crossover
happens at any given time. That is,
only recombination events  may occur that partition the 
sites of a sequence  into two  parts 
that correspond to the sites before and after a given crossover point.
We analyse the resulting models in continuous time (both deterministic
and stochastic), as well as in discrete time. For the deterministic
continuous-time system (Section~\ref{sec:det_cont}), a simple explicit
solution can be given. This simplicity is due to some underlying linearity;
actually, the system may even be diagonalised
(via a \emph{nonlinear} transformation). 
In Section~\ref{sec:stoch_cont}, we
consider the corresponding stochastic process (still in
continuous time), namely, the Moran model with recombination.
This also takes into account the
resampling effect that comes about via random reproduction in a
finite population. In particular, we investigate the relationship
between the expectation of the Moran model and the solution
of the deterministic continuous-time model.
We finally tackle deterministic single-crossover
dynamics in \emph{discrete} time (Section~\ref{sec:det_discr}).
This setting implies additional dependencies, which become
particularly transparent when the so-called
\emph{ancestral recombination process} is considered.  A solution may still
be given, but  its  coefficients  must be determined
recursively. 

Altogether, it will turn out that
the corresponding models, and their analysis, have 
various mathematical facettes that are intertwined with each other,
such as differential equations,
probability theory, and combinatorics.

\section{Deterministic dynamics, continuous time}
\label{sec:det_cont}
\subsection{The model}
We  describe populations at the level of their gametes and thus
identify gametes with individuals.
Their genetic information  is encoded in terms of a 
linear arrangement of sites, indexed by the set $S:= \{ 0,1, \dots , n\}$.
For 
each site $i\in S$, there is a set $X_i$ of `letters'  that may possibly 
occur at that site. 
To allow for a convenient  notation, we restrict ourselves  to 
the simple but important case of \emph{finite} sets $X_i$; for the
full generality of arbitrary locally compact spaces $X_i$, the reader is
referred to  \cite{Baak05} and \cite{BaBa03}.

A \emph{type} is thus  defined 
as a sequence 
\[
x=(x^{}_0,x^{}_1, \dots , x^{}_n) \in X_0 \times X_1 \times \cdots 
\times X_n =:X,
\]
where $X$ is called the \emph{type space}. By 
construction, $x^{}_i$ is the $i$-th coordinate of $x$, and we define 
$x^{}_I := (x^{}_i)_{i \in I}$ as the collection of coordinates 
with indices in $I$, where $I$ is a subset of $S$. 
A \emph{population} is identified with  a non-negative measure
$\omega$ on $X$.
Namely, $\omega(\{x\})$ denotes the frequency of individuals of type $x \in X$ 
and
$\omega(A) := \sum_{x \in A} \omega(\{x\})$ for $A \subseteq X$;
we abbreviate $\omega(\{x\})$ as $\omega(x)$. The set of all
nonnegative measures on $X$ is denoted by $\cM_{\geqslant 0}(X)$.
If we define $\delta_x$ as the point measure on $x$ 
(i.e., $\delta_x(y)= \delta_{x,y}$ for $x,y \in X$), we can also write
$\omega = \sum_{x \in X} \omega(x) \delta_x$. We may,
alternatively, interpret
$\delta_x$ as the basis vector of $\RR_{\geq 0}^{\lvert X \rvert}$
that corresponds to $x$ (where a suitable ordering of types is implied,
and $\lvert X \rvert$ is the number of elements in $X$);
$\omega$ is thus
identified with a vector in $\RR_{\geq 0}^{\lvert X \rvert}$.

At this stage, frequencies need not be normalised;
$\omega(x)$ may simply be thought of as the size of the subpopulation
of type $x$, measured in units so large that it may be considered a
continuous quantity. 
The corresponding normalised version $p := \omega/\|\omega\|$ 
(where $\|\omega\|
:= \sum_{x \in X} \omega(x)=\omega(X)$ is the total population size)
is then a probability distribution on $X$, and may be identified
with a probability vector.

Recombination acts on the links 
between the sites; the links are collected
into the set $L := \left\{ \frac{1}{2} , 
\frac{3}{2}, \dots , \frac{2n-1}{2} \right\}$. We shall use Latin indices for 
the sites and Greek indices for the links, and the implicit rule will always 
be that $\alpha = \frac{2i+1}{2}$ is the link between sites $i$ and $i+1$;
see Figure~\ref{fig:sitesandlinks}.

\begin{figure}[h]
  \psfrag{0}{$0$}
  \psfrag{1}{$1$}
  \psfrag{n}{$n$}
  \psfrag{iinS}{$i \in S$}
  \psfrag{12}{$\frac{1}{2}$}
  \psfrag{32}{$\frac{3}{2}$}
  \psfrag{ungerade}{\hspace{1mm}$\frac{2n-1}{2}$}
  \psfrag{alpha}{$\alpha \in L$}
  \begin{center}
  \includegraphics{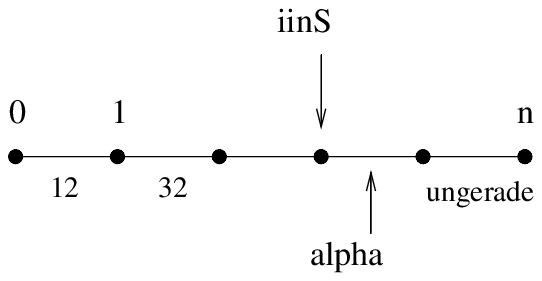}
  \end{center}  \caption{\label{fig:sitesandlinks} {\small Sites and links.}}
\end{figure}

Let recombination happen in every individual, and at every link 
$\alpha \in L$, 
at rate $\varrho^{}_\alpha > 0$.
More precisely, for every $\alpha \in L$,
every individual exchanges, at rate $\varrho^{}_{\alpha}/2$, 
the sites after link $\alpha$ with those of a 
randomly chosen partner. Explicitly, if
the `active' and the partner individual are of types $x$ and $y$,
then the new pair has types $(x^{}_0,x^{}_1,\ldots,
x^{}_{\lfloor \alpha \rfloor}, y^{}_{\lceil \alpha \rceil}, \ldots, y^{}_n)$
and $(y^{}_0,y^{}_1,\ldots,
y^{}_{\lfloor \alpha \rfloor}, x^{}_{\lceil \alpha \rceil}, \ldots, x^{}_n)$,
where $\lfloor\alpha\rfloor 
(\lceil\alpha\rceil )$ is the largest integer below $\alpha$ (the 
smallest above $\alpha$);
see Fig.~\ref{fig:reco}. 
Since every individual
can occur as either the `active' individual or as its randomly
chosen partner, we have a total rate of  $\varrho^{}_\alpha$ for 
crossovers at link $\alpha$. 
For later use, we also define
$\varrho := \sum_{\alpha \in L} \varrho^{}_{\alpha}$.

\begin{figure}[h]
%export 70 percent
\psfrag{x}{$x^{}_0, \ldots , x^{}_n$}
\psfrag{y}{$y^{}_0, \ldots , y^{}_n$}
\psfrag{r}{$\varrho^{}_{\alpha}$}
\psfrag{xy}{$x^{}_0,\ldots,x^{}_{\lfloor \alpha \rfloor}, y^{}_{\lceil \alpha \rceil},\ldots, y^{}_n$}
\psfrag{yx}{$y^{}_0,\ldots,y^{}_{\lfloor \alpha \rfloor}, x^{}_{\lceil \alpha \rceil},\ldots, x^{}_n$}
\psfrag{*}{$* \;\,, \ldots , \; *$}
\psfrag{x*}{$x^{}_0,\ldots,x^{}_{\lfloor \alpha \rfloor}, 
            *\; , \; \ldots \; ,\; *$}
\psfrag{*x}{$* \;,\; \ldots \;, \; * \;, \; x^{}_{\lceil \alpha \rceil}, \ldots , x^{}_n$}
\begin{center}
\hspace{-2.5cm}\includegraphics{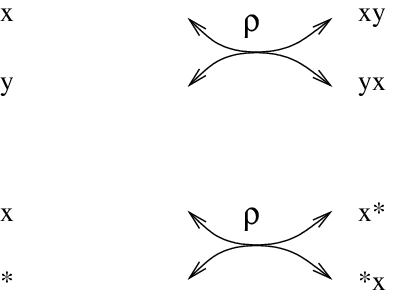}
\end{center}
\caption{\label{fig:reco} {\small 
Upper panel: Recombination between individuals of 
type $x$ and $y$. Lower panel:
The corresponding `marginalised' version that summarises all events
by which individuals of type $x$ are gained or lost (a `$*$' at site $i$
stands for an arbitrary element of $X_i$).
Note that, in either case, the process can go both ways, as
indicated by the arrows.}}
\end{figure}

%We have retained the non-normalised version for the frequencies here
%for easier comparison with the stochastic model 
%(the stochastic process will come naturally as a family 
%of integer-valued random variables, which sum up to the population size
%$N$).
% Of course,
%the more familiar normalised version emerges if $\omega_t$
%in \eqref{eq:ode} is replaced by $p_t := \omega_t / \| \omega_t \|$; 
%the normalising factor on the right-hand side is then, of course, unity.

%Note that 
%the model \eqref{eq:ode} implies Hardy-Weinberg proportions
%(i.e., frequencies of diplotypes are given by independent combination
%of the corresponding haplotypes at all times). In 
%continuous time, this only applies exactly if the duration
%of the diplophase is negligible. However, it is a good approximation
%if recombination is rare at the scale of an individual's lifetime.

In order to formulate the corresponding model, let us 
introduce the projection operators 
$\pi^{}_i$, $i \in S$,  via
\begin{equation}\label{eq:projector}
\begin{array}{rlcl}
\pi^{}_i : & X_0 \times X_1 \times \dots \times X_n & \longrightarrow & X_i\\
& (x^{}_0, x^{}_1, \dots , x^{}_n) & \mapsto & x^{}_i,
\end{array}
\end{equation}
i.e., $\pi^{}_i$ is the canonical projection to the $i$-th coordinate.
Likewise, for any index set $I \subseteq S$, one defines a projector
\[
\begin{array}{rlcl}
\pi^{}_I: & X & \longrightarrow &  \btimes_{i \in I} X_i =: X_I\\
& (x^{}_0,x^{}_1,\ldots, x^{}_n) & \mapsto & (x^{}_i)^{}_{i \in I} =: x^{}_I.
\end{array}
\] 
We shall frequently use the abbreviations 
$\pi^{}_{<\alpha} := \pi^{}_{\{1,\ldots,\lfloor \alpha \rfloor\}}$ and
$\pi^{}_{>\alpha} := \pi^{}_{\{\lceil \alpha \rceil, \ldots, n\}}$,
as well as $x^{}_{<\alpha}:= \pi^{}_{<\alpha}(x)$, 
$x^{}_{>\alpha}:= \pi^{}_{>\alpha}(x)$. The projectors 
$\pi^{}_{<\alpha}$ and $\pi^{}_{>\alpha}$
may be thought of as \emph{cut and forget} operators 
because they take the leading or trailing segment of a sequence $x$, and 
forget about the rest. 

Whereas the $\pi^{}_I$  act on the types, 
we also need the induced mapping at the level of the population, namely,
\begin{equation} \label{eq:marg}
\begin{array}{rccl}
\pi^{}_I. : & \cM_{\geqslant 0} & 
           \longrightarrow & \cM_{\geqslant 0} \\
         &  \omega     & \mapsto &  
              \omega \circ \pi^{-1}_I =: \pi^{}_I . \omega,
\end{array}
\end{equation}
where $\pi^{-1}_I$ denotes the preimage under 
$\pi^{}_I$. 
The operation $.$ (where the dot is on the line) is  the `pullback' 
of $\pi^{}_I$ w.r.t. 
$\omega$;
% in terms of coordinates, the definition may be spelled out as
%\[ 
%  (\pi^{}_I . \omega)(x^{}_I) = \omega \circ \pi^{-1}_I (x^{}_I) 
%  = \omega (\{x\in X \mid  \pi^{}_I(x) = x^{}_I \}), \quad 
%  x^{}_I \in X_I.
%\]  
so, $\pi^{}_I . \omega$ is
the marginal distribution of $\omega$ with respect
to the sites in $I$. In particular,  
$(\pi^{}_{<\alpha} . \omega)(x^{}_{<\alpha})$
%=
%\omega (x^{}_0,x^{}_1, \ldots, x^{}_{\lfloor\alpha \rfloor}, 
%\ast ,\dots ,\ast )$ 
is  the marginal frequency  of  sequences prescribed at 
the sites before $\alpha$, and vice versa for the sites after $\alpha$.

Now, single-crossover
recombination (at the level of the population) means the relinking of 
a randomly chosen leading segment with a randomly chosen trailing segment.
We therefore introduce (elementary) recombination operators 
(or \emph{recombinators}, for short), $R_\alpha : \cM_{\geqslant 0} \to 
\cM_{\geqslant 0}$ for $\alpha \in L$, defined by 
\begin{equation}\label{eq:recombinator}
R_\alpha (\omega) := \frac{1}{\|\omega\|}
\big ( (\pi^{}_{<\alpha} . \omega) \otimes (\pi^{}_{>\alpha} . \omega) \big ).
\end{equation}
Here, the tensor product reflects the independent combination (i.e., the 
product measure) of the two marginals $\pi^{}_{<\alpha} . \omega$ and 
$\pi^{}_{>\alpha} . \omega$. $R_\alpha$ is therefore  a \emph{cut and relink}
operator. $R_\alpha (\omega)$ may be understood as the population 
that emerges if \emph{all} individuals of the population $\omega$ 
disintegrate into 
their leading and trailing segments and these are relinked randomly.
Note that $\|R_\alpha (\omega)\| = \|\omega\|$.

The recombination dynamics  may thus be compactly 
written as
\begin{equation} \label{eq:ode}
\dot{\omega}_t = \sum_{\alpha\in L} 
\varrho^{}_\alpha \big ( R_\alpha (\omega_t) - \omega_t \big ) = 
\sum_{\alpha\in L} \varrho^{}_\alpha (R_\alpha -\one) (\omega_t)
=: \varPhi(\omega_t),
\end{equation}
where $\one$ is the identity operator. Note that \eqref{eq:ode}
is a large, nonlinear system of ordinary differential equations (ODEs).

\subsection{Solution of the ODE system}

The solution of (\ref{eq:ode}) relies on some elementary properties of 
our recombinators. Most importantly, they are idempotents and commute
with each other, i.e.,
\begin{eqnarray} 
R^2_\alpha & = & R_\alpha , \qquad\  \alpha \in L, \label{eq:idpot}\\
R_\alpha R_\beta & = & R_\beta R_\alpha,  \quad \alpha , \beta \in L. 
\label{eq:commut}
\end{eqnarray}
These properties are intuitively  plausible: if the links
before $\alpha$ are already independent of those after $\alpha$
due to a previous recombination event, then further recombination 
at that link does not change the situation; and if a product measure is 
formed with respect to two links $\alpha$ and $\beta$, the result does not 
depend on the order in which the links are affected. For the proof, we 
refer  to \cite[Prop.\ 2]{BaBa03}; let us only 
mention here that it relies on the elementary fact that, for $\omega\in 
\cM_{\geqslant 0}$,
\[
\begin{array}{ll}
\pi^{}_{<\alpha} . \big (R_\beta (\omega) \big ) = \pi^{}_{<\alpha} . \omega, \quad & 
\mbox{ for } \beta  \geq \alpha , \mbox{ and} \\
\pi^{}_{>\alpha} . \big (R_\beta (\omega) \big ) = \pi^{}_{>\alpha} . \omega, & \mbox{ for } 
\beta \leq \alpha ;
\end{array}
\]
that is, recombination at or after $\alpha$ does not affect 
the marginal frequencies
at sites before $\alpha$, and vice versa.

We now define \emph{composite} recombinators as
\[
R_G := \prod_{\alpha\in G} R_\alpha \qquad \mbox{ for } G\subseteq L.
\]
Here, the product is to be read as composition; it is, indeed, a
product if the recombinators are written in terms of their multilinear
matrix
representations, which is available in the case of finite types
considered here (see \cite{Baak01}).
By property (\ref{eq:commut}), the order in the composition plays no role.
Furthermore, (\ref{eq:idpot}) and (\ref{eq:commut}) obviously entail 
$R_G R_H = R_{G\cup H}$ for $G,H \subseteq L$.

With this in hand, we can now state an explicit solution of our problem,
namely,

\begin{theorem}\label{thm:sol}
The solution of the single-crossover dynamics 
\eqref{eq:ode} with initial value $\omega_0$  can be given in
closed form as
\begin{equation}\label{eq:recosol}
\omega_t = \sum_{G\subseteq L} a_G^{} (t) R_G (\omega_0) =: \varphi_t(\omega_0)
\end{equation}
with coefficient functions
\[
   a_G^{} (t) = \prod_{\alpha\in L\backslash G} 
e^{-\varrho^{}_\alpha t} \prod_{\beta\in G} (1-e^{-\varrho^{}_\beta t});
\] 
i.e., $\varphi_t$ is the {\em semigroup}
belonging to the recombination equation \eqref{eq:ode}. \qed
\end{theorem}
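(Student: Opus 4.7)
My plan is to verify that the proposed ansatz $\omega_t=\sum_{G\subseteq L} a_G^{}(t)\, R_G(\omega_0)$ satisfies the initial condition and the ODE \eqref{eq:ode}, and then invoke standard uniqueness. At $t=0$ one has $a^{}_\emptyset(0)=1$ and $a^{}_G(0)=0$ for $G\neq\emptyset$, so the right-hand side collapses to $R_\emptyset(\omega_0)=\omega_0$ (with the empty product read as the identity).

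The heart of the argument is the computation of $R_\alpha(\omega_t)$ for each $\alpha\in L$, which is delicate because $R_\alpha$ is \emph{nonlinear}. First I would observe that $\|\omega_t\|=\|\omega_0\|$: each $R_G(\omega_0)$ has norm $\|\omega_0\|$, and $\sum_G a^{}_G(t)=\prod_{\alpha\in L}\bigl(e^{-\varrho_\alpha t}+(1-e^{-\varrho_\alpha t})\bigr)=1$. Next, combining the marginal-preservation properties stated before the theorem with \eqref{eq:idpot}--\eqref{eq:commut}, one gets, for every $G\subseteq L$,
\[
\pi^{}_{<\alpha}.R_G(\omega_0) \;=\; \pi^{}_{<\alpha}.R_{G_{<\alpha}}(\omega_0),\qquad G_{<\alpha}:=G\cap\{\beta\in L:\beta<\alpha\},
\]
and symmetrically for $\pi^{}_{>\alpha}.R_G(\omega_0)$. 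Thus the left and right marginals depend only on the corresponding left/right portions of $G$.

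Using linearity of the marginal operation and bilinearity of the tensor product in \eqref{eq:recombinator}, $R_\alpha(\omega_t)$ becomes a double sum over $A\subseteq\{\beta<\alpha\}$ and $B\subseteq\{\beta>\alpha\}$ whose weight $\sum_{G:\,G_{<\alpha}=A,\,G_{>\alpha}=B} a^{}_G(t)$ factors through the product structure of $a^{}_G$, summing out the free Bernoulli factor at link $\alpha$. Recognizing each tensor product via $R_\alpha R_{A\cup B}(\omega_0)=R_{A\cup\{\alpha\}\cup B}(\omega_0)$, the sum reassembles to
\[
R_\alpha(\omega_t) \;=\; \frac{1}{1-e^{-\varrho_\alpha t}}\,\sum_{G\ni\alpha} a^{}_G(t)\, R_G(\omega_0).
\]
Substituting into \eqref{eq:ode} and matching the coefficient of each $R_G(\omega_0)$ reduces the problem to the scalar identity
\[
\dot a^{}_G(t) \;=\; \sum_{\alpha\in G}\varrho^{}_\alpha\,\frac{a^{}_G(t)}{1-e^{-\varrho_\alpha t}} \;-\; \varrho\, a^{}_G(t),
\]
which is confirmed by a one-line logarithmic differentiation of the explicit product formula for $a^{}_G(t)$.

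The main obstacle is precisely the nonlinearity of $R_\alpha$: one cannot naively pull $R_\alpha$ past the linear combination $\sum_G a^{}_G(t) R_G(\omega_0)$. What makes the proof close up is that the coefficients $a^{}_G(t)$ form a product measure over links, which, combined with the left/right marginal invariance of the recombinators, forces the nonlinear action of $R_\alpha$ to collapse onto a linear combination within the same finite family $\{R_G(\omega_0)\}_{G\subseteq L}$. Uniqueness of the solution then follows from local Lipschitz continuity of $\varPhi$ on the norm-invariant slice $\{\omega\in\cM_{\geqslant 0}:\|\omega\|=\|\omega_0\|\}$.
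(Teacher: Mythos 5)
Your proposal is correct and follows essentially the same route as the proof the paper cites (verify the ansatz by computing the nonlinear action of $R_\alpha$ on the linear combination and then checking a scalar ODE for the coefficients by logarithmic differentiation); indeed, your key identity $R_\alpha(\omega_t)=\frac{1}{1-e^{-\varrho_\alpha t}}\sum_{G\ni\alpha}a_G^{}(t)R_G^{}(\omega_0)=\sum_{G\subseteq L}a_G^{}(t)R_{G\cup\{\alpha\}}^{}(\omega_0)$ is exactly Theorem~\ref{thm:linear} specialised to $H=\{\alpha\}$, established via the same marginal-invariance and product-measure arguments. The only cosmetic point is that the divided form of that identity is $0/0$ at $t=0$, so one should work with the undivided sum over $A,B$ (or the $R_{G\cup\{\alpha\}}$ form) there.
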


For the proof,  the reader is referred to
\cite[Thm.\ 2]{BaBa03} or \cite[Thm.\ 3]{Baak05} (the former article
contains the original, the latter 
a shorter and more elegant version of the proof). Let us note  that 
the coefficient functions  can be interpreted probabilistically. Given an
individual sequence in the population, $a^{}_{G} (t)$ is the 
probability that the set of links that have seen at least one crossover 
event until time $t$ is precisely the set $G$ (obviously, $\sum_{G \subseteq L }
a^{}_{G} (t)=1$). Note that the product structure of the $a^{}_{G} (t)$
implies independence of links, a decisive feature of the single-crossover
dynamics in continuous time, as we shall see later on. Note also that,
as $t \to \infty$, $\omega_t$ converges to the stationary state
\begin{equation}\label{eq:stat_state}
\omega_{\infty} = 
\frac{1}{\|\omega_0\|^{n-1}}
  \bigotimes_{i =1}^{n} (\pi^{}_{i} . \omega_0),
\end{equation}
in which all sites are independent.

\subsection{Underlying linearity}
The simplicity of the solution in Theorem~\ref{thm:sol}
comes as a certain surprise. After all, 
explicit solutions to  large, nonlinear  ODE systems are  rare
 -- they are usually available for linear systems
at best. For this reason,  the recombination equation
and its solution have already been taken up in the framework of
functional analysis, where they have led to an extension of potential
theory \cite{Popa07}. We will now show that there
is an underlying linear structure that is hidden behind the solution.
It can be stated as 
follows, compare~\cite[Sec.~3.2]{BaBa03} for details.

\begin{theorem} \label{thm:linear}
    Let $\bigl\{c_{G'}^{(L')}(t) \mid \varnothing\subseteq G'\subseteq L'\subseteq L\bigr\}$ 
    be a family of non-negative functions with 
    $c_G^{(L)}(t) = c_{G_1}^{(L_1)}(t)\, c_{G_2}^{(L_2)}(t)$, valid for any 
    partition $L=L_1\ts\dot{\cup}\ts L_2$ of the set $L$ and all $t\geq 0$,
    where $G^{}_i := G\cap L^{}_i$.
    Assume further that these
    functions satisfy $\sum_{H\subseteq L'} c_{H}^{(L')} (t) = 1$ for any 
    $L'\subseteq L$ and $t\geq 0$. 
    If $v\in\mathcal{M}_{\geqslant 0}(X)$ and $H\subseteq L$, one has the identity
\[
     R^{}_{H} \Bigl(\sum_{G\subseteq L} c^{(L)}_{G} (t) R^{}_{G} (v) \Bigr) 
     \, =  \sum_{G\subseteq L} c^{(L)}_{G} (t) R^{}_{G\cup H} (v) \ts ,
\]     
     which is then satisfied for all $ t\ge 0$.   \qed
\end{theorem}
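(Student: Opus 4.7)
The plan is to argue by induction on $|H|$. The base $H = \varnothing$ is immediate, since $R^{}_\varnothing$ is the identity. For the inductive step, pick any $\alpha \in H$, set $H' := H \setminus \{\alpha\}$, and use \eqref{eq:idpot}--\eqref{eq:commut} to write $R^{}_H = R^{}_\alpha R^{}_{H'}$ and $R^{}_\alpha R^{}_{G \cup H'} = R^{}_{G \cup H}$. Applying the inductive hypothesis to $H'$ reduces the claim to the single-link assertion
\[
R^{}_\alpha \Bigl(\sum_{G \subseteq L} c^{(L)}_G (t)\, R^{}_{G \cup H'}(v)\Bigr) = \sum_{G \subseteq L} c^{(L)}_G (t)\, R^{}_{G \cup H}(v),
\]
i.e., to computing $R^{}_\alpha(\tau)$ for $\tau := \sum_G c^{(L)}_G(t)\, R^{}_{G \cup H'}(v)$. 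Since $\|R^{}_{G \cup H'}(v)\| = \|v\|$ and $\sum_G c^{(L)}_G(t) = 1$, one has $\|\tau\| = \|v\|$, so the factor $1/\|\tau\|$ in $R^{}_\alpha(\tau)$ passes cleanly through the sum.

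To exploit this, split $L = L^{}_1 \ts\dot\cup\ts L^{}_2$ with $L^{}_1 := \{\beta \in L : \beta < \alpha\}$ and $L^{}_2 := \{\beta \in L : \beta \geq \alpha\}$, and decompose $G = G^{}_1 \cup G^{}_2$, $H' = H'_1 \cup H'_2$ accordingly; note $H'_2 \subseteq \{\beta > \alpha\}$ since $\alpha \notin H'$. A short double induction, built on the two identities $\pi^{}_{<\alpha}.R^{}_\beta(\omega) = \pi^{}_{<\alpha}.\omega$ for $\beta \geq \alpha$ and its mirror for $\pi^{}_{>\alpha}$ stated just before the theorem, then yields
\[
\pi^{}_{<\alpha}.R^{}_{G \cup H'}(v) = R^{}_{G^{}_1 \cup H'_1}(\pi^{}_{<\alpha}.v), \qquad \pi^{}_{>\alpha}.R^{}_{G \cup H'}(v) = R^{}_{G^{}_2 \cup H'_2}(\pi^{}_{>\alpha}.v),
\]
the right-hand recombinators being interpreted on the reduced spaces $X_{<\alpha}$ and $X_{>\alpha}$. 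Combined with the factorization $c^{(L)}_G = c^{(L^{}_1)}_{G^{}_1}\, c^{(L^{}_2)}_{G^{}_2}$ and the normalizations $\sum_{G^{}_i \subseteq L^{}_i} c^{(L^{}_i)}_{G^{}_i}(t) = 1$, the $G^{}_2$-sum collapses in $\pi^{}_{<\alpha}.\tau$ and the $G^{}_1$-sum collapses in $\pi^{}_{>\alpha}.\tau$, giving
\[
\pi^{}_{<\alpha}.\tau = \sum_{G^{}_1 \subseteq L^{}_1} c^{(L^{}_1)}_{G^{}_1}(t)\, R^{}_{G^{}_1 \cup H'_1}(\pi^{}_{<\alpha}.v)
\]
and the symmetric formula for $\pi^{}_{>\alpha}.\tau$.

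Substituting these into $R^{}_\alpha(\tau) = \|v\|^{-1}(\pi^{}_{<\alpha}.\tau) \otimes (\pi^{}_{>\alpha}.\tau)$ turns the product of sums into a double sum over $G^{}_1, G^{}_2$. In each summand, the same marginal identities read backwards show that $\|v\|^{-1} R^{}_{G^{}_1 \cup H'_1}(\pi^{}_{<\alpha}.v) \otimes R^{}_{G^{}_2 \cup H'_2}(\pi^{}_{>\alpha}.v)$ equals $R^{}_\alpha R^{}_{G \cup H'}(v) = R^{}_{G \cup H}(v)$; reassembling $c^{(L^{}_1)}_{G^{}_1}\, c^{(L^{}_2)}_{G^{}_2} = c^{(L)}_G$ then produces exactly the right-hand side, closing the induction. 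The main obstacle is this marginalization step: one must verify that $\pi^{}_{<\alpha}$ of a composite recombinator really depends only on its left-of-$\alpha$ part and realizes the reduced-space recombinator acting on $\pi^{}_{<\alpha}.v$, with all normalizations correctly tracked. This is not conceptually deep, but it demands a patient double induction, inner on the number of links $< \alpha$ and outer on the number of links $\geq \alpha$ present in $G \cup H'$, using only the two marginal identities already noted in the text.
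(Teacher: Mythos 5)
Your argument is correct and follows essentially the same route as the proof the paper defers to (\cite[Sec.~3.2]{BaBa03}): reduce to a single link $\alpha$ by induction and $R^{}_H=R^{}_\alpha R^{}_{H'}$, compute the two marginals of the sum using the factorisation $c_G^{(L)}=c_{G_1}^{(L_1)}c_{G_2}^{(L_2)}$ and the normalisation to collapse the irrelevant half, and reassemble via the definition of $R^{}_\alpha$. The one point left implicit --- that $\pi^{}_{<\alpha}.$ carries $R^{}_\beta$ for $\beta<\alpha$ into the reduced-space recombinator acting on $\pi^{}_{<\alpha}.v$, which requires the elementary behaviour of marginals of product measures in addition to the two identities you quote --- is routine and you correctly flag it as the technical core.
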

Here, the upper index specifies the respective set of links. 
Clearly, the coefficient functions
$a^{}_{G} (t)$ of Theorem~\ref{thm:sol} satisfy the conditions of 
Theorem~\ref{thm:linear}. The result then means that the recombinators 
act linearly along the solutions~\eqref{eq:recosol} of the recombination equation~\eqref{eq:ode}.
Theorem~\ref{thm:linear} thus has  the consequence that,
on $\mathcal{M}_{\geqslant 0}(X)$, the forward flow of $\eqref{eq:ode}$ commutes with all 
   recombinators, that is, $R^{}_G\circ \varphi^{}_t  = \varphi^{}_t \circ R^{}_G$ for all 
   $t\geq 0$ and all $G\subseteq L$.

But let us go one step further here.
The conventional approach to solve the 
recombination dynamics consists in transforming 
the type frequencies to certain functions (known as principal components)
that diagonalise the dynamics, see 
\cite{Benn54,Daws02, Lyub92} and references therein for more.
We will now show that, 
in continuous time,  they have a particularly
simple structure: they are given by certain correlation functions,
known as {\em linkage disequilibria} (LDE) in biology, which
play an important role in  applications.
They have a counterpart at the level of operators (on $\mathcal{M}_{\geqslant 0}(X)$).
Namely, let us define \emph{LDE operators} via
\begin{equation}\label{eq:T_G}
T_G := \sum_{{\text{\d{$H$}}} \supseteq G} (-1)^{|H\setminus G|} R_H, \qquad G\subseteq L, 
\end{equation}
where the underdot indicates the summation variable. Note that
$T_G$ maps $\mathcal{M}_{\geqslant 0}(X)$ into  $\mathcal{M}(X)$, the
set of \emph{signed} measures on $X$. 
Eq.~(\ref{eq:T_G}) leads to the inverse $R_G = \sum_{{\text{\d{$H$}}}
\supseteq G} T_H$ by the combinatorial M\"obius inversion formula, 
see \cite[Thm.~4.18]{Aign79}.
We then have

\begin{theorem}\label{thm:diag}
If $\omega_t$ is the solution \eqref{eq:recosol}, 
the transformed quantities $T_G(\omega_t)$ satisfy
\begin{equation}\label{eq:TG_decay}
\frac{d}{dt} T_G (\omega_t) = - \Big ( \sum_{\alpha \in L \setminus G} 
\varrho^{}_\alpha \Big ) 
T_G (\omega_t), \qquad G\subseteq L. 
\end{equation}
\end{theorem}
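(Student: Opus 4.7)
The plan is to differentiate the defining identity~\eqref{eq:T_G} termwise and reduce everything to the dynamics of the individual $R_H(\omega_t)$; then a short combinatorial manipulation on the Boolean lattice $\{H : G\subseteq H\subseteq L\}$ should reassemble the right-hand side as a multiple of $T_G(\omega_t)$. Since the recombinators are nonlinear, I cannot naively commute them past the flow, and this is where Theorem~\ref{thm:linear} is crucial: applied to the coefficient family $\{a_G(t)\}$ of Theorem~\ref{thm:sol}, it gives $R_H\circ\varphi_t = \varphi_t\circ R_H$, so that each orbit $t\mapsto R_H(\omega_t)$ is itself a solution of \eqref{eq:ode} started from $R_H(\omega_0)$.

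First I would use this commutation, together with the algebraic identities $R_\alpha R_H = R_{H\cup\{\alpha\}}$ and $R_H^2=R_H$ from \eqref{eq:idpot}--\eqref{eq:commut}, to compute
\[
\frac{d}{dt}R_H(\omega_t)\;=\;\varPhi\bigl(R_H(\omega_t)\bigr)\;=\;\sum_{\alpha\in L\setminus H}\varrho_\alpha\,\bigl(R_{H\cup\{\alpha\}}(\omega_t)-R_H(\omega_t)\bigr),
\]
the $\alpha\in H$ terms cancelling because $R_\alpha R_H=R_H$. Inserting this into the derivative of \eqref{eq:T_G} and interchanging the order of summation, I obtain
\[
\frac{d}{dt}T_G(\omega_t)\;=\;\sum_{\alpha\in L}\varrho_\alpha\sum_{\substack{H\supseteq G\\ \alpha\notin H}}(-1)^{|H\setminus G|}\bigl(R_{H\cup\{\alpha\}}(\omega_t)-R_H(\omega_t)\bigr).
\]
For $\alpha\in G$ the inner sum is empty because $\alpha\notin H$ contradicts $H\supseteq G$, so only $\alpha\in L\setminus G$ survive.

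For such an $\alpha$, I would reindex the first inner piece via $H' := H\cup\{\alpha\}$, which bijects $\{H\supseteq G:\alpha\notin H\}$ with $\{H'\supseteq G:\alpha\in H'\}$ and carries the sign $(-1)^{|H\setminus G|}$ to $-(-1)^{|H'\setminus G|}$. The two pieces then combine into one sum over all $H\supseteq G$ (split according to whether $\alpha\in H$ or not), giving exactly $-T_G(\omega_t)$ by \eqref{eq:T_G}. Summing $-\varrho_\alpha T_G(\omega_t)$ over $\alpha\in L\setminus G$ yields \eqref{eq:TG_decay}. The only real obstacle is the nonlinearity of $R_H$, which prevents interchange with $d/dt$ in the obvious way; Theorem~\ref{thm:linear} removes exactly this obstacle, and what remains is a sign-cancellation on the subset lattice of $L$ which is essentially a disguised M\"obius-type identity.
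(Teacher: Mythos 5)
Your argument is correct: the commutation $R_H\circ\varphi_t=\varphi_t\circ R_H$ (the stated consequence of Theorem~\ref{thm:linear}) legitimately turns each $t\mapsto R_H(\omega_t)$ into a solution of \eqref{eq:ode}, the reduction of $\varPhi(R_H(\omega_t))$ to $\sum_{\alpha\in L\setminus H}\varrho_\alpha(R_{H\cup\{\alpha\}}-R_H)(\omega_t)$ via $R_\alpha R_H=R_{H\cup\{\alpha\}}$ is sound, and the reindexing $H\mapsto H\cup\{\alpha\}$ on $\{H\supseteq G:\alpha\notin H\}$ does recombine the two pieces into $-T_G(\omega_t)$ for each $\alpha\in L\setminus G$. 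This is essentially the route of the proof the paper delegates to \cite[Sec.~3.3]{BaBa03}, which likewise rests on the linearity of the recombinators along solutions together with the M\"obius-type sign cancellation on the subset lattice, so no further comparison is needed.
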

\begin{proof}
See  \cite[Sec.~3.3]{BaBa03}.
\end{proof}

Obviously, Eq.~\eqref{eq:TG_decay} is a system of \emph{decoupled, linear,
  homogeneous} differential equations with the usual exponential solution. 
Note that this simple form emerged through the \emph{nonlinear} transform~(\ref{eq:T_G}) as applied to the solution of the \emph{coupled, nonlinear} 
differential equation \eqref{eq:ode}.

Suitable components of the signed measure $T_G(\omega_t)$ may then be identified
to work with in practice (see \cite{BaBa03,BaHe08} for details);
they correspond to  correlation functions
of all orders  and  decouple and decay 
exponentially. These functions turn out to be particularly well-adapted 
to the problem since they rely
on ordered partitions, in contrast to conventional LDE's used elsewhere
in population genetics, 
which  rely on general partitions (see \cite[Ch.~V.4]{Buer00}
for review).

\section{Stochastic dynamics, continuous time}
\label{sec:stoch_cont}

\subsection{The model}

The effect of finite population size in population genetics is,
in continuous time, well captured by the \emph{Moran model}. It describes
a population of fixed size $N$ and
takes into account the stochastic fluctuations due to random reproduction,
which manifest themselves via a \emph{resampling effect} 
(known as genetic drift in biology). 
More precisely,
the finite-population  counterpart of our deterministic model 
is the Moran model with single-crossover recombination. 
To simplify matters (and in order to clearly dissect the 
individual effects of recombination and resampling),
we shall use the decoupled (or parallel) version of the model, which assumes
that resampling  and recombination occur independently of
each other, as illustrated in Fig.~\ref{fig:moran}. More precisely, in our
finite population of fixed size $N$,  every individual 
experiences, independently of the others, 
\begin{itemize}\itemsep-1ex
\item resampling  at rate $b/2$. The individual reproduces, 
the offspring inherits the parent's type and replaces a
randomly chosen individual (possibly its own parent).\\
\item recombination  at (overall) rate $\varrho^{}_{\alpha}$ at link $\alpha \in L$. 
Every individual picks  a random partner (maybe itself) at rate 
$\varrho^{}_{\alpha}/2$, and the pair exchanges
the sites after link $\alpha$. That is,  if the recombining individuals
have types $x$ and $y$, they are replaced by the two offspring
individuals
$(x^{}_{<\alpha},y^{}_{>\alpha})$ and $(y^{}_{<\alpha},x^{}_{>\alpha})$, as in 
the deterministic case, and Fig.~\ref{fig:reco}. 
As before,
the per-capita rate of
recombination at link $\alpha$ is then $\varrho^{}_{\alpha}$, because
both orderings of the individuals lead to the same type count in the
population.
\end{itemize}

\begin{figure}[h]
  \begin{center}
  \includegraphics[scale=.6]{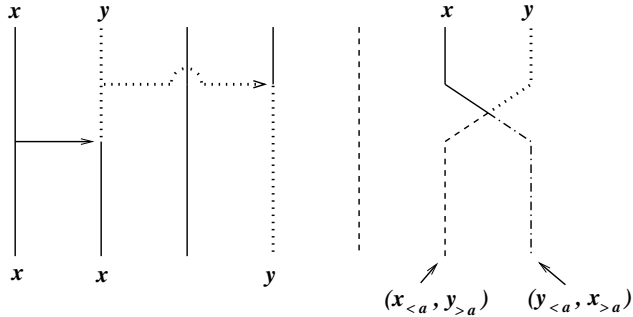}
  \end{center}  \caption{\label{fig:moran} {\small Graphical representation
of the Moran model (with parallel resampling and recombination). Every
individual is represented by a vertical line; time runs down the page.
Resampling is indicated by arrows, with the parent individual at
the tail and the offspring at the tip. Recombination is depicted by
a crossing between two individuals. Note that the spatial information
suggested by the graphical representation does not play a role in the model;
one is only interested in the frequencies of the various types.}}
\end{figure}

Note that the randomly chosen second individual
(for resampling or recombination)
may be the active individual itself; then, effectively, nothing happens.
One might, for biological reasons, prefer to exclude these events by
sampling from the remaining population only; but this means nothing but
a change of time scale of order $1/N$. 

To formalise this verbal description of the process,
let the state of the population at time $t$ be given by the collection
(the random vector)
\[
Z_t = \big(Z_t(x)\big)_{x\in X} \in 
E:= \Big \{z \in \{0,1,...,N\}^{|X|}_{} \,  \Big  | \, \sum_{x} z(x) = N \Big \},
\]
where
$Z_t(x)$ is the number of individuals of type $x$ at time $t$; clearly, 
$\sum_{x\in X}Z_t(x)=N$. We  also use $Z_t$ in the sense of a
(random counting) measure, in analogy with $\omega_t$ (but keep
in mind that $Z_t$ is integer-valued and counts single individuals,
whereas $\omega_t$
denotes continuous frequencies in an infinite population).
The letter $z$ will be used to denote realisations of $Z_t$ --- but 
note that the symbols $x,y$, and $z$ are not on equal
footing ($x$ and $y$ will continue to be types).
The stochastic process $\{Z_t\}_{t \geq 0}$ is the continuous-time Markov
chain on $E$ defined as follows.
If the current state is $Z_t=z$, two types of  transitions may occur:
\begin{eqnarray} 
&\text{resampling:} & z \rightarrow z + s(x,y), \quad 
                      s(x,y):= \delta_x - \delta_y  \nonumber, \\
 && \text{at rate } \frac{1}{2N} b \ts z(x) z(y) \; \text{ for }  
    (x,y) \in X \times X \label{eq:rate_resampling}\\
&\text{recombination:} & z \rightarrow z +r(x,y,\alpha), \nonumber \\
&                      & r(x,y,\alpha) :=  
\delta_{(x^{}_{<\alpha},y^{}_{>\alpha})} +\delta_{(y^{}_{<\alpha},x^{}_{>\alpha})} - \delta_x - \delta_y, \nonumber \\
 &&\text{at rate } \frac{1}{2N}
    \varrho^{}_{\alpha}z(x) z(y) \; \text{ for } 
  (x,y) \in X \times X, \alpha\in L \label{eq:rate_reco}
\end{eqnarray}
(where $\delta_x$ is the point measure on $x$, as before).
Note that, in \eqref{eq:rate_resampling}
and \eqref{eq:rate_reco}, transitions that leave $E$ are 
automatically excluded by the fact that  the corresponding rates vanish.
On the other hand, `empty transitions' ($s(x,y)=0$ or $r(x,y,\alpha)=0$) are 
explicitly included 
(they occur  if $x=y$ in  resampling
or recombination, and if $x^{}_{<\alpha}=y^{}_{<\alpha}$ or
$x^{}_{>\alpha}=y^{}_{>\alpha}$ in recombination).

\subsection{Connecting stochastic and deterministic models}

Let us now explore the connection between the stochastic process 
$\{Z_t\}_{t\geq 0}$ on $E$, its normalised version 
$\{\widehat Z_t\}_{t\geq 0}=\{Z_t\}_{t\geq 0}/N$ on $E/N$,
and the solution $\omega_t=\varphi_t(\omega_0)$ (Eq.~\eqref{eq:recosol}) of
the differential equation. It is easy to see
(and no surprise) that
\begin{equation}\label{eq:ivp}
\frac{d}{dt}\mathbb{E}(Z_t) = \EE \big(\varPhi(Z_t)\big), 
\end{equation}
with $\varPhi$ of \eqref{eq:ode}. But this does not,
\emph{per se}, lead to a `closed' 
differential equation for $\mathbb{E} (Z_t)$,  because 
it is not clear whether $\EE(\varPhi(Z_t))$ can be written as a function
of $\EE(Z_t)$ alone---after all, $\varPhi$ is nonlinear. 
\emph{In the absence of resampling,} however, we have

\begin{theorem}
\label{thm:reco_expect}
Let $\{Z_t\}_{t \geq 0}$ be the recombination process without resampling (i.e., $b=0$), and let $Z_0$ be fixed. Then,
$\mathbb{E} (Z_t )$ satisfies the differential equation
$$
\frac{d}{dt} \mathbb{E} (Z_t ) =  \varPhi \big ( \mathbb{E} (Z_t) \big )
$$
with initial value $Z_0$, and
$\varPhi$  from \eqref{eq:ode}; therefore,
\[
  \EE(Z_t) = \varphi_t(Z_0), \quad \text{for all} \; t \geq 0,
\]
with $\varphi_t$ from \eqref{eq:recosol}. Likewise, for all $t \geq 0$,
\[
  \EE(T_G Z_t) = T_G \big (\varphi_t(Z_0) \big ).
\]
\end{theorem}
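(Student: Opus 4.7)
The plan is to derive the evolution equation for $\EE(Z_t)$ from the Markov generator of $\{Z_t\}$ and then reduce the problem to a factorization identity for expected marginal counts.

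A direct generator computation, using the rates~\eqref{eq:rate_reco} with $b=0$ and applied to the linear functional $z \mapsto z(w)$, yields---after collapsing the sums over the dummy types $x,y$ via the definitions of $\pi^{}_{<\alpha}.z$ and $\pi^{}_{>\alpha}.z$---exactly $\varPhi(z)(w)$. This proves~\eqref{eq:ivp}, i.e.\ $\tfrac{d}{dt}\EE(Z_t) = \EE\bigl(\varPhi(Z_t)\bigr)$. It then remains to show that $\EE\bigl(\varPhi(Z_t)\bigr) = \varPhi\bigl(\EE(Z_t)\bigr)$.

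The main obstacle is that $\varPhi$ is nonlinear in $Z_t$: each recombinator $R_\alpha$ contains the bilinear term $\tfrac{1}{N}(\pi^{}_{<\alpha}.Z_t)\otimes(\pi^{}_{>\alpha}.Z_t)$. Coordinate by coordinate, the required identity reduces to the factorization
\[
\EE\bigl[(\pi^{}_{<\alpha}.Z_t)(u)\ts(\pi^{}_{>\alpha}.Z_t)(v)\bigr] \;=\; \EE\bigl[(\pi^{}_{<\alpha}.Z_t)(u)\bigr]\ts\EE\bigl[(\pi^{}_{>\alpha}.Z_t)(v)\bigr]
\]
for every $\alpha \in L$, $u \in X_{<\alpha}$, $v \in X_{>\alpha}$, and all $t \geq 0$. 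To prove it, I would introduce the left--right covariance $C_\alpha(u,v,t)$ as the difference of the two sides, note that $C_\alpha(\cdot,\cdot,0) \equiv 0$ because $Z_0$ is deterministic, and then derive an ODE for $C_\alpha$ by applying the generator to the quadratic functional $z \mapsto (\pi^{}_{<\alpha}.z)(u)(\pi^{}_{>\alpha}.z)(v)$. The hope---and the main technical point---is that the resulting system closes as a linear homogeneous system in the family of such covariances, so that the zero initial value propagates in time. A more probabilistic alternative uses the Poisson graphical construction of the Moran model (Fig.~\ref{fig:moran}): tracing the ancestry of a typical individual backward through the recombination events and exploiting the uniform random choice of partners, one argues that the contributions to the before-$\alpha$ and after-$\alpha$ portions become, after summation over labels, uncorrelated.

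Once the factorization is in hand, $m_t := \EE(Z_t)$ satisfies $\dot m_t = \varPhi(m_t)$ with $m_0 = Z_0$; uniqueness for this ODE together with Theorem~\ref{thm:sol} gives $\EE(Z_t) = \varphi_t(Z_0)$. For the second assertion, the M\"obius-type formula $T_G = \sum_{H \supseteq G}(-1)^{|H\setminus G|} R_H$ together with linearity of expectation reduces the claim to $\EE\bigl(R_H(Z_t)\bigr) = R_H\bigl(\varphi_t(Z_0)\bigr) = R_H\bigl(\EE(Z_t)\bigr)$ for each $H \supseteq G$. Iterating~\eqref{eq:recombinator} with the help of~\eqref{eq:idpot}--\eqref{eq:commut} yields the closed form $R_H(\omega)(x) = \|\omega\|^{-|H|}\prod_P (\pi^{}_P.\omega)(x_P)$, where $P$ ranges over the contiguous pieces into which $H$ partitions $S$; the required equality is then the analogous factorization of expectations over this finer partition, obtained from the single-link factorization by induction on $|H|$.
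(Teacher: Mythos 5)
You have correctly reduced the theorem to the factorization $\EE\bigl[(\pi^{}_{<\alpha}.Z_t)(u)\,(\pi^{}_{>\alpha}.Z_t)(v)\bigr]=\EE\bigl[(\pi^{}_{<\alpha}.Z_t)(u)\bigr]\,\EE\bigl[(\pi^{}_{>\alpha}.Z_t)(v)\bigr]$, and this is indeed the crux of the paper's argument: it is exactly what Lemma~\ref{lem:indep_marginals} (conditional independence of the marginal processes on non-overlapping stretches) is there to deliver. Your derivation of \eqref{eq:ivp} from the generator, the uniqueness argument for the ODE, and the reduction of the $T_G$ statement via M\"obius inversion and the product form of $R_H$ all match the intended route.

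The genuine gap is in your first proposed mechanism for the factorization. The system of covariances $C_\alpha(u,v,t)$ does \emph{not} close as a linear homogeneous system at second order. Writing $f(z)=(\pi^{}_{<\alpha}.z)(u)$ and $g(z)=(\pi^{}_{>\alpha}.z)(v)$, it is true that no single jump changes both $f$ and $g$ (a crossover at $\beta<\alpha$ alters only the left marginal, at $\beta>\alpha$ only the right one, at $\beta=\alpha$ neither), so the generator satisfies $\mathcal{L}(fg)=f\mathcal{L}g+g\mathcal{L}f$ with no carr\'e-du-champ term. But $\mathcal{L}f$ is itself \emph{quadratic} in the marginal counts --- the marginal process is again a recombination process, with rates proportional to products of counts --- so $\tfrac{d}{dt}\EE[fg]$ involves third mixed moments, whose evolution involves fourth moments, and so on; the hierarchy never truncates. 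Zero initial covariance therefore does not propagate by the argument you sketch. The same issue resurfaces in your final induction on $|H|$: to peel off one factor at a time you need independence of the \emph{processes} $\{\pi^{}_{<\alpha}.Z_t\}$ and $\{\pi^{}_{>\alpha}.Z_t\}$ (so that any function of the right marginal decouples from any function of the left one), not merely the vanishing of particular cross-covariances. The correct tool is the full conditional independence of the marginal Markov chains, which is what the paper's Lemma~\ref{lem:indep_marginals} asserts and what your second, probabilistic alternative (disjoint driving events in the graphical construction, with transition rates of each marginal depending only on that marginal) gestures at; that alternative is the one to develop.
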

\begin{proof}
See  \cite[Thm.~1 and Cor.~1]{BaHe08}. 
\end{proof}
The result again points to 
some underlying linearity, which, in the context of
the \emph{stochastic} model, should be connected to some kind of independence.
Indeed, the key to the proof of Thm.~\ref{thm:reco_expect} is
a lemma concerning the independence of marginal processes. 
For $I \subseteq S$, we introduce the `stretch' of $I$ as 
\[
  J(I) := \{i \in S \mid \min(I) \leq i \leq \max(I)\},
\]
and look at the projection of the recombination process on non-overlapping
stretches. This is the content of

\begin{lemma}\label{lem:indep_marginals} 
Let $\{Z_t\}_{t\geq 0}$ be the recombination process without resampling 
(i.e., $b=0$). Let $A,B \subseteq S$ with
$J(A) \cap J(B)=\varnothing$.
Then, $\{\pi^{}_A . Z_t\}_{t \geq 0}$ and $\{\pi^{}_B . Z_t\}_{t \geq 0}$
are conditionally (on $Z_0$) independent 
Markov chains on $E_A$ and $E_B$.
\end{lemma}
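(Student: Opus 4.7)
The plan is to exploit two elementary observations about the recombination displacement $r(x,y,\beta)$ in \eqref{eq:rate_reco} when viewed through the projection $\pi^{}_A$. If $\beta<\min A$ or $\beta>\max A$, then both recombinants agree with $x$ and $y$ on every site in $A$, so $\pi^{}_A . r(x,y,\beta)=0$. If instead $\beta$ strictly separates two sites of $A$, then $\pi^{}_A . r(x,y,\beta)$ equals the recombination displacement on the reduced state space $X^{}_A:=\btimes_{i\in A}X_i$ applied to $(\pi^{}_A x,\pi^{}_A y)$ at the corresponding induced link. In particular, the effect on $\pi^{}_A . z$ depends on $(x,y)$ only through their projections.

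The first step is to establish that $\{\pi^{}_A . Z_t\}$ is itself a Markov chain. For this I would apply the generator $\mc{L}$ of $\{Z_t\}$ (with $b=0$) to a function of the form $f(z)=g(\pi^{}_A . z)$ and sum the rate $\tfrac{\varrho^{}_\beta}{2N}z(x)z(y)$ over the fibres $\pi_A^{-1}(w_1)$ and $\pi_A^{-1}(w_2)$. This produces $\tfrac{\varrho^{}_\beta}{2N}(\pi^{}_A . z)(w_1)(\pi^{}_A . z)(w_2)$, a function of $\pi^{}_A . z$ alone, so $\pi^{}_A . Z_t$ is itself a single-crossover recombination process on $A$ with effective crossover rate $\sum_{i_j<\beta<i_{j+1}}\varrho^{}_\beta$ at the $j$-th induced link, writing $A=\{i_1<\cdots<i_k\}$. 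An identical argument works for $B$.

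The second step uses the disjointness $J(A)\cap J(B)=\varnothing$: assuming without loss of generality $\max J(A)<\min J(B)$, every $\beta\in L$ either lies in $J(A)$ (and so leaves $\pi^{}_B . Z$ unchanged), lies in $J(B)$ (symmetrically, leaves $\pi^{}_A . Z$ unchanged), or sits strictly between the two stretches (affecting neither projection). Hence every transition of $Z_t$ alters at most one of $\pi^{}_A . Z_t,\pi^{}_B . Z_t$, the joint generator splits as $\mc{L}=\mc{L}_A+\mc{L}_B$ with $\mc{L}_A,\mc{L}_B$ acting on disjoint coordinates with rates depending only on their own coordinate, the two generators commute, the joint semigroup factorises as $e^{t\mc{L}}=e^{t\mc{L}_A}e^{t\mc{L}_B}$, and conditional independence given $Z_0$ follows. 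The main obstacle is the Markov-projection step: the collapse of the rates to a function of the marginal relies on the bilinear mass-action form $z(x)z(y)$ combined with the pointwise (not merely in-expectation) vanishing $\pi^{}_A . r(x,y,\beta)=0$ for $\beta$ outside $J(A)$. Once this autonomy of each projection is in place, the link-disjointness argument and the standard product-semigroup fact deliver independence almost for free.
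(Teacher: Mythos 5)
Your argument is correct and is essentially the proof that the paper delegates to \cite[Lemma~1]{BaHe08}: one checks that $\pi^{}_A.r(x,y,\beta)$ depends on $(x,y)$ only through $(x^{}_A,y^{}_A)$ and vanishes unless $\beta$ separates two sites of $A$, so that summing the mass-action rates over fibres shows each marginal (and the pair) is a lumped Markov chain, and the disjointness of the stretches makes the joint generator split into commuting parts acting on separate coordinates, whence the semigroup and the finite-dimensional distributions factorise. The only cosmetic slip is that a link $\beta$ may also lie before $J(A)$ or after $J(B)$, not only between them, but such links likewise affect neither projection, so the conclusion stands.
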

\begin{proof} 
See \cite[Lemma~1]{BaHe08}.
\end{proof}

Let us now re-include resampling, at rate $b/2>0$, and consider the
stochastic process $\{Z_t^{(N)}\}_{t \geq 0}$ defined by both 
\eqref{eq:rate_resampling}
and \eqref{eq:rate_reco}, where we add the upper index here to indicate
the dependence on $N$. Now, Lemma~\ref{lem:indep_marginals} and
Thm.~\ref{thm:reco_expect} are no longer valid.
The processes $\{\pi^{}_{<\alpha}.Z_t^{(N)}\}_{t \geq 0}$ 
and $\{\pi^{}_{>\alpha} . Z_t^{(N)}\}_{t \geq 0}$ are still individually Markov,
but their
resampling events are  coupled (replacement of $y^{}_{<\alpha}$ by $x^{}_{<\alpha}$
is always tied to replacement of $y^{}_{>\alpha}$ by $x^{}_{>\alpha}$).
Hence the marginal processes fail to be independent, so that
no equivalent of Lemma
\ref{lem:indep_marginals} holds.

Let us, therefore, change focus and consider the normalised version 
$\{\hZ_t^{(N)}\}_{t \geq 0} = \{Z_t^{(N)}\}_{t \geq 0}/N$.
In line with general folklore in population genetics,
in the limit $N\to \infty$, the 
relative frequencies  $\{\hZ_t^{(N)}\}_{t \geq 0}$
cease to fluctuate and are then given by the solution of the 
corresponding deterministic equation. More precisely, we have

\begin{prop}\label{prop:lln}
Consider the family of processes
$\{\hZ_t^{(N)}\}_{t \geq 0}=\frac{1}{N} \{Z_t^{(N)}\}_{t\geq 0}$,
$N=1,2,\ldots$,  where
$\{Z_t^{(N)}\}_{t\geq 0}$ is defined by 
\eqref{eq:rate_resampling} and \eqref{eq:rate_reco}.
Assume that the initial states are chosen so that
$\lim_{N\to\infty} \hZ^{(N)}_0 =p_0$. Then, 
for every given $t\geq 0$, one has
\begin{equation}\label{eq:LLN}
\lim_{N\to\infty} \sup_{s\leq t} | \hZ^{(N)}_s - p_s | =0 
\end{equation}
with probability $1$, where $p_s := \varphi_s(p_0)$ is the
solution of the deterministic recombination equation \eqref{eq:ode}. \qed
\end{prop}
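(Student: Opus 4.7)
The plan is to recognise $\{\hZ_t^{(N)}\}_{t\ge 0}$ as a density-dependent family of Markov chains in the sense of Kurtz and apply the standard law-of-large-numbers recipe: represent the chain via time-changed Poisson processes, identify the drift as $\varPhi$, and control the martingale fluctuations by a Gronwall argument.

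First, I would observe that both transition types in \eqref{eq:rate_resampling} and \eqref{eq:rate_reco} have the density-dependent form: a jump of size $O(1/N)$ in $\hZ^{(N)}$ occurs at a rate of the form $N\ts\lambda(\hZ^{(N)})$, where $\lambda$ is a polynomial (quadratic) function of the frequencies. Writing $\{Y_k\}$ for an independent family of unit-rate Poisson processes indexed by the possible transitions (one per pair $(x,y)$ for resampling and per triple $(x,y,\alpha)$ for recombination), the Poisson representation reads
\[
\hZ^{(N)}_t \, = \, \hZ^{(N)}_0 \, + \, \frac{1}{N} \sum_{k} \eta_k \ts Y_k\!\left( \int_0^t N\ts\lambda_k(\hZ^{(N)}_s)\, ds\right),
\]
with jump vectors $\eta_k\in\{s(x,y),\, r(x,y,\alpha)\}$. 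Decomposing each $Y_k(Nu) = Nu + M_k(Nu)$ into its compensator and a centred martingale yields $\hZ^{(N)}_t = \hZ^{(N)}_0 + \int_0^t F(\hZ^{(N)}_s)\, ds + \mc{E}^{(N)}_t$, where $F(\hat z) := \sum_k \eta_k \lambda_k(\hat z)$ and $\mc{E}^{(N)}_t$ is the error term built from the martingales $M_k$.

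Next I would identify the drift. The resampling contribution
$\sum_{x,y}\frac{b}{2}\hat z(x)\hat z(y)\,(\delta_x - \delta_y)$
vanishes by the symmetry $s(x,y) = -s(y,x)$ and $\|\hat z\|=1$, so resampling contributes nothing to the macroscopic equation --- this reflects the fact that drift conserves types in expectation. The recombination part yields exactly $\sum_{\alpha}\varrho^{}_{\alpha}(R_\alpha - \one)(\hat z) = \varPhi(\hat z)$, since $\sum_{x,y}\hat z(x)\hat z(y)\,\delta_{(x^{}_{<\alpha},y^{}_{>\alpha})} = (\pi^{}_{<\alpha}.\hat z)\otimes(\pi^{}_{>\alpha}.\hat z) = R_\alpha(\hat z)$ when $\|\hat z\|=1$, and the $-\delta_x-\delta_y$ terms contribute $-\varrho^{}_{\alpha}\hat z$. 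As $F=\varPhi$ is a polynomial in the finitely many entries of $\hat z$, it is Lipschitz on the compact simplex $\{p\in\mc{M}^{}_{\ge 0}(X): \|p\|=1\}$, which ensures uniqueness of the solution $p_s=\varphi_s(p_0)$ of \eqref{eq:ode}.

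Finally, subtracting the integral equation satisfied by $p_t$ gives
\[
\hZ^{(N)}_t - p_t \, = \, (\hZ^{(N)}_0 - p_0) \, + \, \int_0^t \bigl(F(\hZ^{(N)}_s) - F(p_s)\bigr)\, ds \, + \, \mc{E}^{(N)}_t.
\]
Taking the supremum over $s\le t$ and applying the Lipschitz estimate, Gronwall's inequality yields
\[
\sup_{s\le t}\bigl|\hZ^{(N)}_s - p_s\bigr| \, \le \, \Bigl(\bigl|\hZ^{(N)}_0 - p_0\bigr| + \sup_{s\le t}\bigl|\mc{E}^{(N)}_s\bigr|\Bigr)\, e^{C t}
\]
for a constant $C$ depending on the Lipschitz norm of $\varPhi$. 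The hypothesis gives $\hZ^{(N)}_0 \to p_0$ a.s., and the functional strong LLN for Poisson processes, $\sup_{u\le U}|N^{-1}Y_k(Nu)-u|\to 0$ a.s., together with the boundedness of the rates $\lambda_k$ on the simplex, forces $\sup_{s\le t}|\mc{E}^{(N)}_s|\to 0$ a.s. on every compact interval. This establishes \eqref{eq:LLN}. The only delicate step is controlling the martingale term uniformly in $s\le t$; this is the point where the boundedness of the rates (guaranteed by the compactness of the normalised state space) is essential, and where one invokes the Kurtz machinery rather than a pointwise $L^2$ estimate.
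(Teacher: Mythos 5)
Your proposal is correct and follows essentially the same route as the paper, which simply invokes Thm.~11.2.1 of \cite{EtKu86} (the Kurtz law of large numbers for density-dependent families, proved exactly via the Poisson representation, identification of the drift as $\varPhi$, and a Gronwall estimate on the martingale term) with the explicit verification carried out in \cite{BaHe08}. Your identification of the drift --- the cancellation of the resampling term by the symmetry $s(x,y)=-s(y,x)$ and the recombination term summing to $\sum_{\alpha}\varrho^{}_{\alpha}(R_\alpha-\one)$ on the simplex --- is precisely the computation that makes that theorem applicable here.
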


The proof is an elementary application of Thm.~11.2.1 of \cite{EtKu86};
see Prop.~1 of \cite{BaHe08} for the explicit workout.

Note that the convergence in \eqref{eq:LLN} 
applies for any given $t$,
but need not carry over to $t \to \infty$. Indeed, if resampling is present, the population size
required to get close to the deterministic solution
is expected to grow over all bounds with increasing $t$. 
This is because, for every finite $N$, the Moran model with resampling and
recombination is an {\em absorbing} Markov chain, which leads to
fixation (i.e., to a homogeneous population of uniform type) in
finite time with probability one (for the special case of just two
types without recombination, the expected time 
is known to be of order
$N$ if the initial frequencies are both $1/2$ \cite[p.~93]{Ewen04}).
In sharp contrast, the
deterministic system never loses any type, and the stationary state,
the complete product measure \eqref{eq:stat_state},
is, in a sense, even the most variable state accessible to
the system. For increasing $N$, 
finite populations stay close to the deterministic limit for an increasing
length of time.

\section{Discrete time}
\label{sec:det_discr}
Let us  return to the deterministic setting and consider
the \emph{discrete-time} version of our single-crossover dynamics \eqref{eq:ode},
that is,
\begin{equation} \label{eq:reco_dis}
    \omega^{}_{t+1} \ts = \ts \omega^{}_t + \sum_{\alpha\in L} 
    \widetilde \varrho^{}_{\alpha} \bigl( R_{\alpha}
        - \one \bigr) ( \omega^{}_t )  =: \widetilde \varPhi(\omega^{}_t) \, .
\end{equation}
Here, the coefficients $\widetilde \varrho^{}_{\alpha}>0$, $\alpha\in L$, are the
\emph{probabilities} for a crossover at link $\alpha$ in every generation
(as opposed to the \emph{rates} $\varrho^{}_{\alpha}$ of the continuous-time
setting).
Consequently, we must
have  $0 < \sum_{\alpha\in L} \widetilde \varrho^{}_\alpha \leq 1$.

Based on the result for the continuous-time model,
the solution is expected to be of the form

\begin{equation}\label{solutiondiscrete}
   \omega^{}_t \,  = \, \widetilde \varPhi^t  (\omega^{}_0 ) 
       \ts  = \sum_{G\subseteq L} \widetilde a^{}_G ( t ) R^{}_G ( \omega^{} _0 ) \, ,
\end{equation}
with non-negative $\widetilde a^{}_G(t)$, $G\subseteq L$, 
$\sum_{G\subseteq L} \widetilde a^{}_G ( t ) = 1$,
describing the (still unknown) coefficient functions arising from the dynamics.
This representation of the solution was first stated by
Geiringer~\cite{Geir44}.
The coefficient functions will have the same probabilistic interpretation as the corresponding
$a^{}_G(t)$ in the continuous-time model,
so that $\widetilde a^{}_G(t)$ is the probability that the links that have been involved
in recombination until time $t$ are exactly those of the set $G$.

But there is a crucial difference.
Recall that, in continuous time, single cross\-overs imply
\emph{independence} of links, which is expressed in the product structure of the 
$a^{}_G(t)$ (see~Thm.~\ref{thm:sol}).
This independence is lost in discrete time, where a crossover event at one link 
forbids any other cut at other links in the same time step. It is
therefore not surprising that
a closed solution is not available in this case. It will, however, turn out
that a solution can be stated in terms of the (generalised) eigenvalues of the
system (which are known explicitly), together with coefficients to be determined
via a simple recursion. But it is rewarding to
take a closer look at the dynamics first.

Let us introduce the following abbreviations:
\[
   L^{}_{ \leq \alpha} := \left \{ i \in L \mid i \leq \alpha \right \} , \quad
    L^{}_{ \geq \alpha}  := \left \{ i \in L \mid i \geq \alpha \right \},
\]
and, for
each $G\subseteq L$,
\[
    G^{}_{< \alpha} := \left \{ i \in G \mid i < \alpha \right \} , \quad
    G^{}_{> \alpha}  :=  \left \{ i \in G \mid i > \alpha \right \}. 
\]   
Furthermore, we set 
$\eta := 1- \textstyle\sum_{\alpha \in L} \widetilde \varrho^{}_{\alpha}$.  
The dynamics \eqref{eq:reco_dis} is then reflected in the following
dynamics of the coefficient functions:

\begin{theorem}\label{thm:adevelop}
  For all $G\subseteq L$ and $t\in\mathbb{N}_0$, the coefficient functions 
$\widetilde a^{}_G(t)$ evolve according to
   \begin{equation}\label{nonlinrecur}
     \widetilde a^{}_G(t+1)\, = \, \eta \thinspace \widetilde a^{}_G(t) + \sum_{\alpha \in G } \widetilde \varrho^{}_{ \alpha} 
          \Bigl( \sum_{H \subseteq L_{ \geq \alpha} } \widetilde a^{}_{G^{}_{ < \alpha}\cup H}(t) \Bigr) \thinspace
          \Bigl( \sum_{K \subseteq L_{ \leq \alpha }} \widetilde a^{}_{K \cup G^{}_{> \alpha} }(t) \Bigr) \ts ,
   \end{equation}
with initial condition $\widetilde a^{}_G(0) = \delta^{}_{G,\varnothing}$. \qed
\end{theorem}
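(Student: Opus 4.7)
My plan is to proceed by induction on $t$: define $\widetilde a^{}_G(t+1)$ from $\widetilde a^{}_G(t)$ via~\eqref{nonlinrecur} with base case $\widetilde a^{}_G(0)=\delta^{}_{G,\varnothing}$, and show that the thus-defined $\omega^{}_t := \sum_{G\subseteq L} \widetilde a^{}_G(t) R^{}_G(\omega^{}_0)$ satisfies~\eqref{eq:reco_dis}. This sidesteps any uniqueness question for the representation~\eqref{solutiondiscrete}. Writing~\eqref{eq:reco_dis} as $\omega^{}_{t+1} = \eta\ts\omega^{}_t + \sum_{\alpha \in L} \widetilde\varrho^{}_\alpha R^{}_\alpha(\omega^{}_t)$, the $\eta\ts\omega^{}_t$ piece contributes the leading $\eta\ts\widetilde a^{}_G(t)$ in~\eqref{nonlinrecur} transparently, so the inductive step reduces entirely to evaluating $R^{}_\alpha(\omega^{}_t)$ for each $\alpha \in L$ and reading off the coefficient of each $R^{}_G(\omega^{}_0)$.

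The main obstacle is that $R^{}_\alpha$ is \emph{nonlinear} in $\omega$ (because of the normalisation $1/\|\omega\|$ and the tensor product in~\eqref{eq:recombinator}), so it does not distribute over $\sum_G \widetilde a^{}_G(t) R^{}_G(\omega^{}_0)$. The remedy is to push the nonlinearity through the two linear pullbacks $\pi^{}_{<\alpha}.$ and $\pi^{}_{>\alpha}.$ appearing in the definition of $R^{}_\alpha$, and to invoke the marginalisation identities listed just after~\eqref{eq:commut}: $\pi^{}_{<\alpha}.R^{}_\beta = \pi^{}_{<\alpha}.$ for $\beta \geq \alpha$ and $\pi^{}_{>\alpha}.R^{}_\beta = \pi^{}_{>\alpha}.$ for $\beta \leq \alpha$. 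Iterating gives $\pi^{}_{<\alpha}.R^{}_G(\omega^{}_0) = \pi^{}_{<\alpha}.R^{}_{G^{}_{<\alpha}}(\omega^{}_0)$ and dually for $\pi^{}_{>\alpha}$. Grouping the ansatz by the parts of $G$ below and above $\alpha$ then yields
\[
  \pi^{}_{<\alpha}.\omega^{}_t \, = \sum_{A \subseteq L^{}_{<\alpha}} \Big(\sum_{H \subseteq L^{}_{\geq \alpha}} \widetilde a^{}_{A \cup H}(t)\Big) \ts \pi^{}_{<\alpha}.R^{}_A(\omega^{}_0),
\]
together with the analogous expansion for $\pi^{}_{>\alpha}.\omega^{}_t$ indexed by $B \subseteq L^{}_{>\alpha}$ with weights $\sum_{K \subseteq L^{}_{\leq \alpha}} \widetilde a^{}_{K \cup B}(t)$. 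These are exactly the two parenthesised sums appearing in~\eqref{nonlinrecur}.

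Since $\|\omega^{}_t\| = \|\omega^{}_0\|$ (recombinators preserve total mass), plugging the two expansions into $R^{}_\alpha(\omega^{}_t) = \|\omega^{}_0\|^{-1} (\pi^{}_{<\alpha}.\omega^{}_t) \otimes (\pi^{}_{>\alpha}.\omega^{}_t)$ produces a double sum over pairs $(A,B)$ with $A \subseteq L^{}_{<\alpha}$ and $B \subseteq L^{}_{>\alpha}$. For each pair, the same marginalisation identities applied in reverse, together with $R^{}_G R^{}_H = R^{}_{G \cup H}$ (from \eqref{eq:idpot}--\eqref{eq:commut}), reassemble the tensor product into $R^{}_\alpha(R^{}_{A \cup B}(\omega^{}_0)) = R^{}_{A \cup \{\alpha\} \cup B}(\omega^{}_0)$. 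Since every $G \ni \alpha$ is uniquely of the form $A \cup \{\alpha\} \cup B$ with $A = G^{}_{<\alpha}$ and $B = G^{}_{>\alpha}$, the coefficient of $R^{}_G(\omega^{}_0)$ in $\widetilde\varrho^{}_\alpha R^{}_\alpha(\omega^{}_t)$ is precisely the $\alpha$-summand of~\eqref{nonlinrecur} (and vanishes for $\alpha \notin G$). Summing over $\alpha \in L$, swapping the order so that $G$ is the outer index and $\alpha \in G$ the inner one, and adding the $\eta\ts\widetilde a^{}_G(t)$ from $\eta\ts\omega^{}_t$ reproduces~\eqref{nonlinrecur} as the coefficient of $R^{}_G(\omega^{}_0)$ in $\omega^{}_{t+1}$, closing the induction. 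Everything outside the nonlinear step for $R^{}_\alpha(\omega^{}_t)$ is combinatorial bookkeeping around the decomposition $G = G^{}_{<\alpha} \cup (G \cap \{\alpha\}) \cup G^{}_{>\alpha}$.
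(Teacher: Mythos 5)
Your proposal is correct and follows essentially the same route as the formal proof the paper cites (\cite[Thm.~3]{WBB10}): induction on $t$, pushing the two marginalisations $\pi^{}_{<\alpha}.$ and $\pi^{}_{>\alpha}.$ through the ansatz via the identities after \eqref{eq:commut}, and reassembling the tensor product into $R^{}_{G_{<\alpha}\cup\{\alpha\}\cup G_{>\alpha}}(\omega^{}_0)$ --- which is exactly the formalisation of the paper's verbal ``leading segment / trailing segment'' explanation. The only step worth one extra line is the claim $\|\omega^{}_t\|=\|\omega^{}_0\|$, which is cleanest obtained directly from the dynamics \eqref{eq:reco_dis} (mass preservation of $R^{}_\alpha$ plus $\eta+\sum_\alpha\widetilde\varrho^{}_\alpha=1$), or equivalently by checking that the recursion preserves $\sum_{G}\widetilde a^{}_G(t)=1$.
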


A verbal description of this dynamics was already given by Geiringer \cite{Geir44}; a formal proof may be found in \cite[Thm.~3]{WBB10}.

The above iteration is easily understood intuitively:
A type $x$ resulting from recombination at link $\alpha$ is composed of two segments $x^{}_{<\alpha}$
and $x^{}_{>\alpha}$. These segments themselves may have been pieced together in previous 
recombination events already, and the iteration explains the possible cuts these segments may
carry along. The first term in the product stands for the type delivering the leading segment (which
may bring along arbitrary cuts in the trailing segment),
the second for the type delivering the trailing one (here any leading segment is allowed).
The term $\eta \thinspace \widetilde a^{}_G(t)$ covers
the case of no recombination.

Let us now have a closer look at the structure of the dependence  between
links in  discrete time. To this end, note first that the set 
$G = \bigl\{\alpha^{}_1, \ldots, \alpha^{}_{\left|G\right|}\bigr\}\subseteq L$
with $\alpha^{}_1~< \alpha^{}_2 ~<~ \cdots~ <~ \alpha^{}_{\left|G\right|}$
partitions $L\setminus G$ into
$\mathcal{L}^{}_G := \left\{I_0^G, I_1^G, \ldots, I_{\left|G\right|}^G\right\}$, 
where
\begin{equation}\label{linksegments}
 \begin{split}
  I_0^G & = \left\{\alpha \in L : \tfrac{1}{2} \leq \alpha < \alpha^{}_1 \right\}, \quad
    I_{\left|G\right|}^G  = 
    \bigl\{\alpha \in L : \alpha^{}_{\vert G\vert} < \alpha \leq \tfrac{2n-1}{2} \bigr\} \, , \\
  \mbox{and }  I_{\ell}^G  &= \left\{\alpha \in L : \alpha^{}_{\ell} < \alpha < \alpha^{}_{\ell+1} \right\}
  \mbox{ for } 1\leq \ell\leq \vert G\vert -1. 
 \end{split}
\end{equation}
Cutting all links in $G$ decomposes the  original system (of sites and
links) into subsystems which are independent of each other from then on.
In particular, the links in $I_j$ become independent of those in
$I_k$, for $k \neq j$.  The probability that none of these 
subsystems experiences any  further recombination is
\begin{equation}\label{eq:def_lambda}
  \lambda^{}_G \,=\,  \prod_{ i = 0 }^{ \vert G \vert} 
            \Bigl( 1 - \sum_{ \alpha^{}_i \in I_i^G} \widetilde 
            \varrho^{}_{ \alpha^{}_i} \Bigr). 
\end{equation}
In particular, $\lambda^{}_{\varnothing} = \eta = 1-\sum_{\alpha\in L} \widetilde
\rho^{}_{\alpha}\geq 0$.
The $\lambda^{}_G$ are, at the same time, the generalised eigenvalues that
appear when the system is diagonalised and have been previously identified by
Bennett~\cite{Benn54}, Lyubich \cite{Lyub92} and Dawson~\cite{Daws02}.

A most instructive way to detail the effect of dependence
is the \emph{ancestral recombination process}:
start from an individual in the present population, let time
run backwards and consider how this individual's type has been
pieced together from different fragments in the past. In the four-sites
example of Fig.~\ref{fig:ancestral}, the probability 
that exactly $1/2$ and $3/2$
have been cut reads
\begin{equation}
\begin{split}\label{eq:a_twoway}
 \widetilde a^{}_{\{\frac{1}{2},\frac{3}{2}\}}(t)  = & \widetilde \varrho^{}_{\frac{1}{2}}\widetilde \varrho_{\frac{3}{2}} \sum_{k=0}^{t-2}
 \lambda_{\varnothing}^k \sum_{i=0}^{t-2-k} \lambda_{\frac{1}{2}}^i 
\lambda_{\{\frac{1}{2},\frac{3}{2}\}}^{t-2-k-i} \\
& + \widetilde \varrho^{}_{\frac{1}{2}}\widetilde \varrho^{}_{\frac{3}{2}} (1-\widetilde \varrho^{}_{\frac{5}{2}}) \sum_{k=0}^{t-2}
 \lambda_{\varnothing}^k \sum_{i=0}^{t-2-k} \lambda_{\frac{3}{2}}^i 
\lambda_{\{\frac{1}{2},\frac{3}{2}\}}^{t-2-k-i}.
\end{split}
\end{equation}
Here, the first (second) term corresponds to the possibility that link $1/2$ 
($3/2$) is the first to be
cut.
Obviously, the two possibilities are not symmetric: If $3/2$ is the first
to break, an additional factor of $(1-\widetilde \varrho_{5/2})$ is required to guarantee
that, at the time of the second recombination event (at $1/2$),
the trailing segment (sites 2 and 3) remains
intact while the leading segment (sites 0 and 1) is cut.

\begin{figure}[h]
  \psfrag{lk}{$\scriptstyle \lambda_{\varnothing}^k$}
  \psfrag{l1}{$\scriptstyle \lambda_{\{\frac{1}{2},\frac{3}{2}\}}^{t-2-k-i}$}
  \psfrag{l2}{$\scriptstyle \lambda_{\frac{1}{2}}^i$}
  \psfrag{l3}{$\scriptstyle \lambda_{\frac{3}{2}}^i$}
  \psfrag{o}{$\scriptstyle 1$}
  \psfrag{r1}{$\scriptstyle \widetilde \varrho_{\frac{1}{2}}$}
  \psfrag{r2}{$\scriptstyle \widetilde \varrho_{\frac{3}{2}}$}
  \psfrag{r3}{\mbox{\hspace{3mm}}
             $\scriptstyle 1-\widetilde\varrho_{\frac{5}{2}}$}
  \begin{center} \hspace{-1cm}
  \includegraphics[scale=.42]{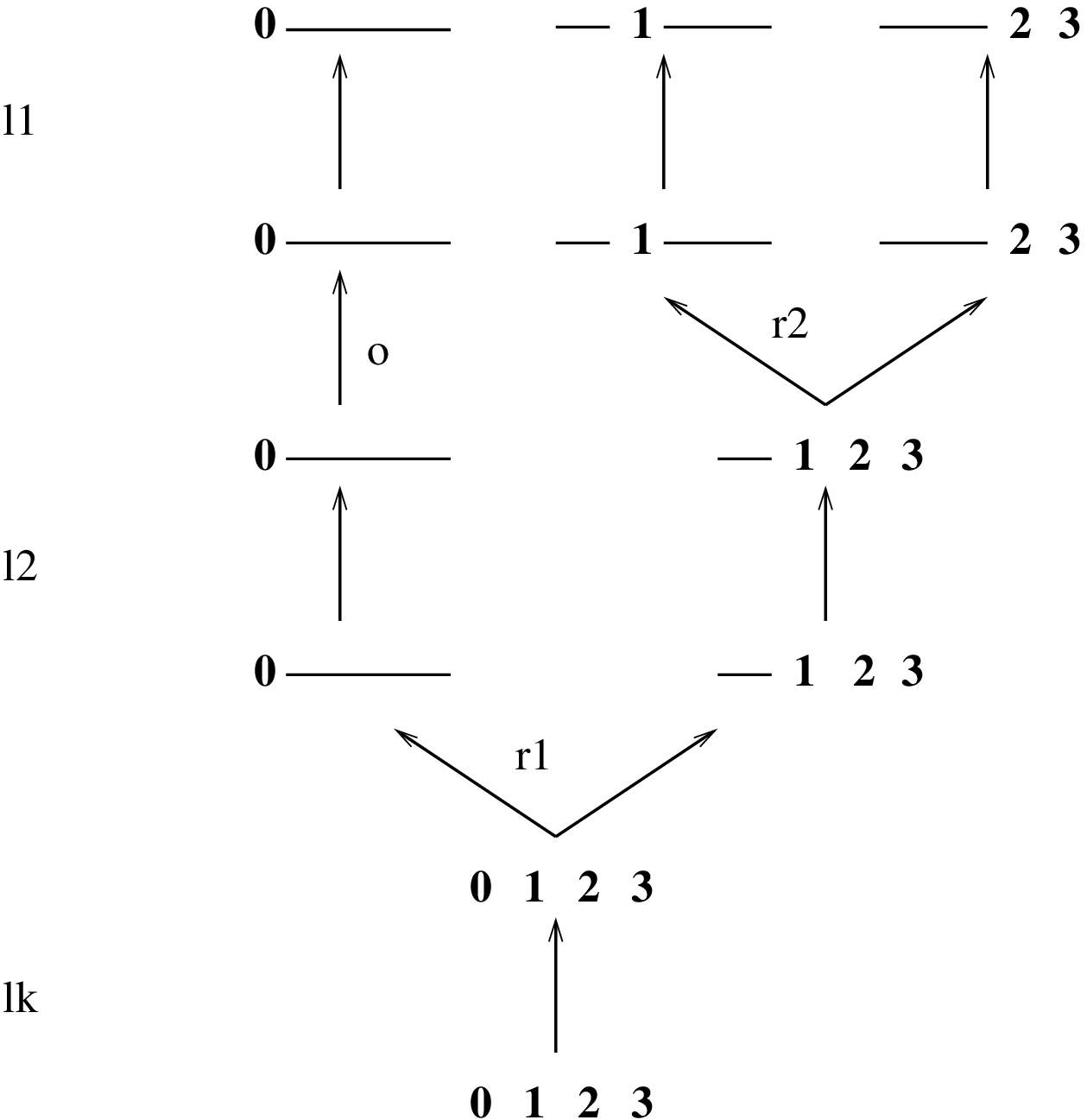}
  \qquad \quad
  \includegraphics[scale=.42]{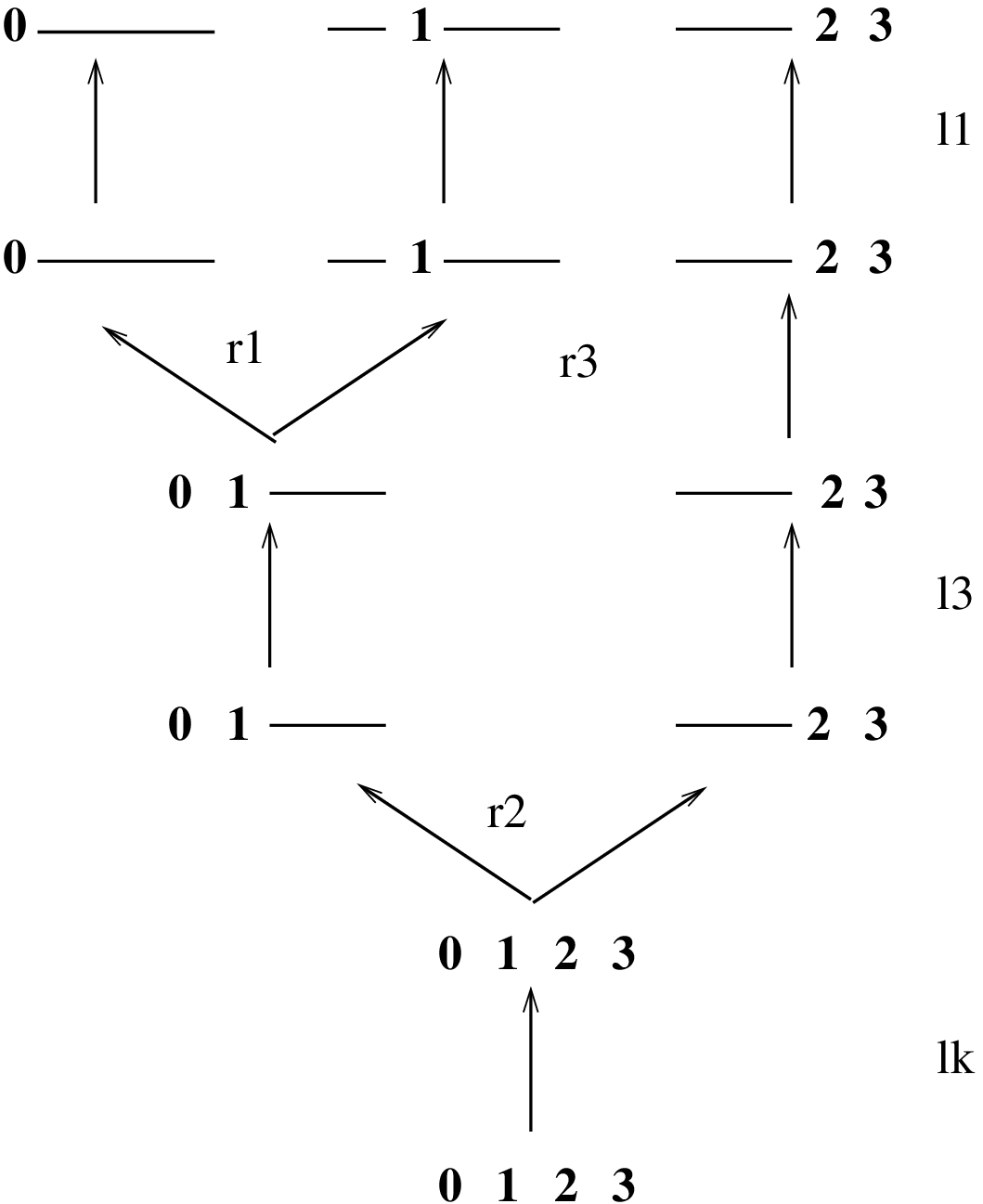}
  \end{center}  \caption{\label{fig:ancestral} {\small The ancestral
recombination process: possible histories of the sequence $0123$
(at the bottom). The two panels illustrate the
two terms of $\widetilde a_{\{1/2,3/2\}}(t)^{}$ in Eq.~\eqref{eq:a_twoway}}
(left: link $1/2$ is cut first; right: link $3/2$ is cut first.)
Arrows point in the backward direction of time. Blank lines indicate
arbitrary leading or trailing segments with which parts of the sequence 
have joined 
during recombination (they correspond to the asterisks ($*$) in 
Fig.~\ref{fig:reco}). The probability that nothing happens for a while
(straight arrows
only) is given by (powers of) the generalised eigenvalues \eqref{eq:def_lambda}.} 
\end{figure} 

Despite these complications, the discrete-time dynamics can
again be solved, even directly at the level of the $\widetilde a_G^{}(t)$, 
albeit slightly less explicitly than in continuous time. Indeed, it may
be shown (and will be detailed in a forthcoming paper) that
the coefficient functions have
the form
\[
 \widetilde a_G^{(L)}(t) = 
\sum_{H \subseteq G} \gamma^{(L)}_G(H) (\lambda_H^{(L)})^t,
\]
where the  upper index has again  been added to indicate the
dependence on the system.
The coefficents $\gamma^{(L)}_G(H)$ ($H \subseteq G$)
are defined recursively as follows. For $G \neq \varnothing$, 
\begin{equation}
\gamma^{(L)}_G(H) = \begin{cases}
  \frac{1}{\lambda_H^{(L)}-\lambda_{\varnothing}^{(L)}} \sum_{\alpha \in H}
  \widetilde \varrho^{}_{\alpha} \gamma^{(L<\alpha)}_{G_{<\alpha}}(H_{<\alpha}) 
  \gamma^{(L>\alpha)}_{G_{>\alpha}}(H_{>\alpha}), & H \neq \varnothing \\
  - \sum_{\varnothing \neq J \subseteq G} \gamma^{(L)}_G(J), & H=\varnothing.
  \end{cases}
\end{equation} 
Together with the initial value $\gamma^{(L)}_{\varnothing}(\varnothing)=1$,
this may be solved recursively.

A diagonalisation of the system (analogous to that in Thm.~\ref{thm:diag})
may also be achieved via a related, albeit technically more involved
recursion \cite{WBB10}.

\section{Concluding remarks and outlook}
The  results presented here can naturally only represent a  narrow
segment from a large area with lively recent and current activities.
Let us close this contribution by mentioning some important further directions
in the context of  recombination.

Our restriction to single crossovers provided a starting
point with a uniquely transparent structure (mainly due to the
independence of links in continuous time). However, arbitrary recombination
patterns (which partition the set of links into two \emph{arbitrary} 
parts) can also be dealt with, as has been done for the deterministic case
in  \cite{Lyub92,Daws02}. The underlying
mathematical structure will be further investigated in a forthcoming paper,
for both the deterministic and the stochastic models.

Above,  genetic material was exchanged reciprocally
at recombination events, so that the length of each sequence
remains constant. But sequences may also shift relative to each other
before cutting and relinking (so-called \emph{unequal crossover}),
which entails changes in length, see  \cite{Baak08} and references therein
for more.

The most important aspect of modern population genetics is the
backward-in-time point of view. This is natural because evolution is
mainly a historical science and today's researchers try to infer the
past from samples of individuals taken from present-day populations.
We have hinted at this
with our version of an ancestral recombination process, but  would
like to emphasise that this is only a toy version.
The full version of this process also takes into account resampling (as in 
Sec.~\ref{sec:stoch_cont}, with $b>0$) and
aims at the  law of \emph{genealogies} of \emph{samples} from \emph{finite} populations.
This point of view was introduced by Hudson \cite{Huds83}.
The fundamental concept here is the
\emph{ancestral recombination graph}: 
a branching-coalescing graph, where branching (backwards in time)
comes about by recombination (as in Fig.~\ref{fig:ancestral}), 
but lines may also
coalesce where two individuals go back to a common ancestor
(this corresponds to a  resampling event forward in time). 
For recent introductions into this topic,  see  \cite[Chap.~3]{Durr08},
\cite[Chap.~5]{HSW05}, or \cite[Chap.~7]{Wake09}; these 
texts also contain overviews of how  recombination may be  
inferred from genomic datasets.

Last not least, recombination and resampling are but two of the various
processes that act on genes in populations. Further 
inclusion of mutation and/or selection leads to a wealth of 
challenging problems,
whose investigation has stimulated the exploration of new mathematical
structures, concepts, and methods; let us only mention
\cite{BES04},  \cite{PHW06}, and \cite{JeSo10} as recent examples.
This development is expected to continue and intensify in the years
to come -- not least because it concerns the processes that have
shaped present-day genomes.

% This is for the references %
% The reference below to Ahlfor's Complex Analysis book %
% is a sample that you should remove and put your own   %
% references %
% You can cite it with \cite{ahlfors} %

% Done :-) %

\begin{thebibliography}{99}
%\bibitem{ahlfors}L.V. Ahlfors, \emph{Complex analysis}, Mc-Graw
%Hill, New York, USA, 1979. 

\bibitem{Aign79}
M.~Aigner,
\emph{Combinatorial Theory}, Springer, Berlin, 1979. Reprint 1997.

\bibitem{Baak01}
E.~Baake, Mutation and recombination with tight linkage,
\emph{J.\ Math.\ Biol.} {\bf 42} (2001), 455--488.

\bibitem{Baak05}
M.~Baake, Recombination semigroups on measure spaces,
\emph{Monatsh.\ Math.} {\bf 146} (2005), 267--278 and 
{\bf 150} (2007), 83--84 (Addendum).
arXiv:math.CA/0506099.  

\bibitem{Baak08}
M.~Baake, Repeat distributions from unequal crossovers,
\emph{Banach Center Publ.} {\bf 80} (2008), 53--70.
arXiv:0803.1270.

\bibitem{BaBa03}
M.~Baake and E.~Baake,
An exactly solved model for mutation, recombination and selection,
\emph{Canad.\ J.\ Math.} {\bf 55} (2003), 3--41 and {\bf 60} (2008), 264--265 
(Erratum).
arXiv:math.CA/0210422. 

\bibitem{BaHe08}
E.\ Baake and I.\ Herms,
Single-crossover dynamics: Finite versus infinite populations,
\emph{Bull.\ Math.\ Biol.} {\bf 70} (2008), 603--624.
arXiv:q-bio/0612024.

\bibitem{BES04}
N.~H.~Barton, A.~M.~Etheridge, and A.~K.~Sturm,
Coalescence in a random background,
\emph{Ann.~Appl.~Prob.} {\bf 14} (2004), 754-785.
arXiv:math/0406174.

\bibitem{Benn54}
J.~Bennett,
On the theory of random mating,
\emph{Ann.\ Hum.\ Genet.} {\bf 18} (1954), 311--317.

\bibitem{Buer00}
R.~B{\"u}rger,
\emph{The Mathematical Theory of Selection, Recombination, and
  Mutation},
Wiley, Chichester, 2000.

\bibitem{Daws02}
K.~Dawson,
The evolution of a population under recombination: How to linearize
  the dynamics,
\emph{Linear Algebra Appl.} {\bf 348} (2002), 115--137.

\bibitem{Durr08}
R.~Durrett,
\emph{Probability Models for DNA Sequence Evolution},
2nd ed., Springer, New York, 2008.

\bibitem{EtKu86}
S.~N. Ethier and T.~G. Kurtz,
\emph{Markov Processes -- Characterization and Convergence},
Wiley, New York, 1986.
Reprint 2005.

\bibitem{Ewen04}
W.~Ewens,
\emph{Mathematical Population Genetics},
Springer, Berlin, 2nd ed., 2004.

\bibitem{Geir44}
H.~Geiringer,
On the probability theory of linkage in {M}endelian heredity,
\emph{Ann.\ Math.\ Statist.} {\bf 15} (1944), 25--57.

\bibitem{HSW05}
J.~Hein, M.~H.~Schierup, and C.~Wiuf,
\emph{Gene Genealogies, Variation and Evolution},
Oxford University Press, Oxford, 2005.

\bibitem{Huds83}
R.~R.~Hudson,
Properties of a neutral allele model with intragenic
recombination,
\emph{Theor.\ Pop.\ Biol.} {\bf 23} (1983), 183--201.

\bibitem{JeSo10}
P.A.~Jenkins and  Y.S.~Song,
An asymptotic sampling formula for the coalescent
with recombination, 
\emph{Ann. Appl. Prob.}, in press.

\bibitem{Lyub92}
Y.~Lyubich,
\emph{Mathematical Structures in Population Genetics},
Springer, New York, 1992.

\bibitem{McHR83}
D.~McHale and G.~Ringwood,
Haldane linearization of baric algebras,
\emph{J.\ London Math.\ Soc.} {\bf 28} (1983), 17--26.

\bibitem{PHW06}
P. Pfaffelhuber, B. Haubold, and A. Wakolbinger,
Approximate genealogies under genetic hitchhiking,
\emph{Genetics} {\bf 174} (2006), 1995-2008.
 
\bibitem{Popa07}
E.~Popa,
Some remarks on a nonlinear semigroup acting on positive measures,
in: O.~Carja and I.~I. Vrabie, eds., \emph{Applied Analysis and
  Differential Equations}, pp. 308--319, World Scientific, Singapore, 2007.

\bibitem{Ring85}
G.~Ringwood,
Hypergeometric algebras and {M}endelian genetics,
\emph{Nieuw Archief v.\ Wiskunde} {\bf 3} (1985), 69--83.

\bibitem{Wake09}
J.~Wakeley,
\emph{Coalescent Theory.}
Roberts, Greenwood Village, CO, 2009.

\bibitem{WBB10}
U.~von Wangenheim, E.~Baake, and M.~Baake,
Single-crossover dynamics in discrete time,
\emph{J.\ Math.\ Biol.}, in press.  arXiv:0906.1678.


\end{thebibliography}
\end{document}